\documentclass{article}



\usepackage[nonatbib, final]{neurips_2020}



\usepackage[utf8]{inputenc} 
\usepackage[T1]{fontenc}    
\usepackage{hyperref}       
\usepackage{url}            
\usepackage{booktabs}       
\usepackage{amsfonts, amssymb, amsmath}       
\usepackage{xfrac}       
\usepackage[final]{microtype}      

\usepackage[most]{tcolorbox} 

\usepackage[group-separator={,}]{siunitx}

\usepackage{graphicx}
\usepackage{subcaption}
\graphicspath{{figures/}}
\usepackage{wrapfig} 

\usepackage{etoolbox}
\BeforeBeginEnvironment{wrapfigure}{\setlength{\intextsep}{0pt}}

\usepackage{amsthm}
\newtheorem{lemma}{Lemma}
\newtheorem{proposition}{Proposition}
\newtheorem{assumption}{Assumption}

\usepackage{algorithm}
\usepackage{algpseudocode}
\newfloat{algorithm}{t}{lop} 

\usepackage[capitalise]{cleveref} 

\newcommand{\cN}{\mathcal{N}}
\newcommand{\cO}{\mathcal{O}}
\newcommand{\cP}{\mathcal{P}}
\newcommand{\cR}{\mathcal{R}}
\newcommand{\cS}{\mathcal{S}}
\newcommand{\cA}{\mathcal{A}}

\newcommand{\grad}[1]{\nabla #1} 
\DeclareMathOperator*{\E}{\mathbb{E}}


\usepackage[backend=biber, maxbibnames=99, url=false, doi=false, isbn=false]{biblatex}
\addbibresource{references.bib}
\addbibresource{references_lukas.bib}

\title{Shared Experience Actor-Critic for \\Multi-Agent Reinforcement Learning}

%

\author{%
  Filippos Christianos \\
  School of Informatics\\
  University of Edinburgh\\
  \texttt{f.christianos@ed.ac.uk} \\
   \And
  Lukas Schäfer \\
  School of Informatics\\
  University of Edinburgh\\
  \texttt{l.schaefer@ed.ac.uk} \\
   \And
   Stefano V. Albrecht \\
   School of Informatics\\
   University of Edinburgh \\
   \texttt{s.albrecht@ed.ac.uk} \\
}

\begin{document}
\maketitle

\begin{abstract}
Exploration in multi-agent reinforcement learning is a challenging problem, especially in environments with sparse rewards. We propose a general method for efficient exploration by sharing experience amongst agents. Our proposed algorithm, called \emph{Shared Experience Actor-Critic} (SEAC), applies experience sharing in an actor-critic framework by combining the gradients of different agents. We evaluate SEAC in a collection of sparse-reward multi-agent environments and find that it consistently outperforms several baselines and state-of-the-art algorithms by learning in fewer steps and converging to higher returns. In some harder environments, experience sharing makes the difference between learning to solve the task and not learning at all.
\end{abstract}

\section{Introduction}



Multi-agent reinforcement learning (MARL) necessitates exploration of the environment dynamics and of the joint action space between agents. This is a difficult problem due to non-stationarity caused by concurrently learning agents~\cite{Papoudakis2019DealingLearning} and the fact that the joint action space typically grows exponentially in the number of agents. The problem is exacerbated in environments with sparse rewards in which most transitions will not yield informative rewards.


\begin{wrapfigure}[13]{r}{0.25\textwidth}
    \centering 
    \includegraphics[width=0.23\textwidth]{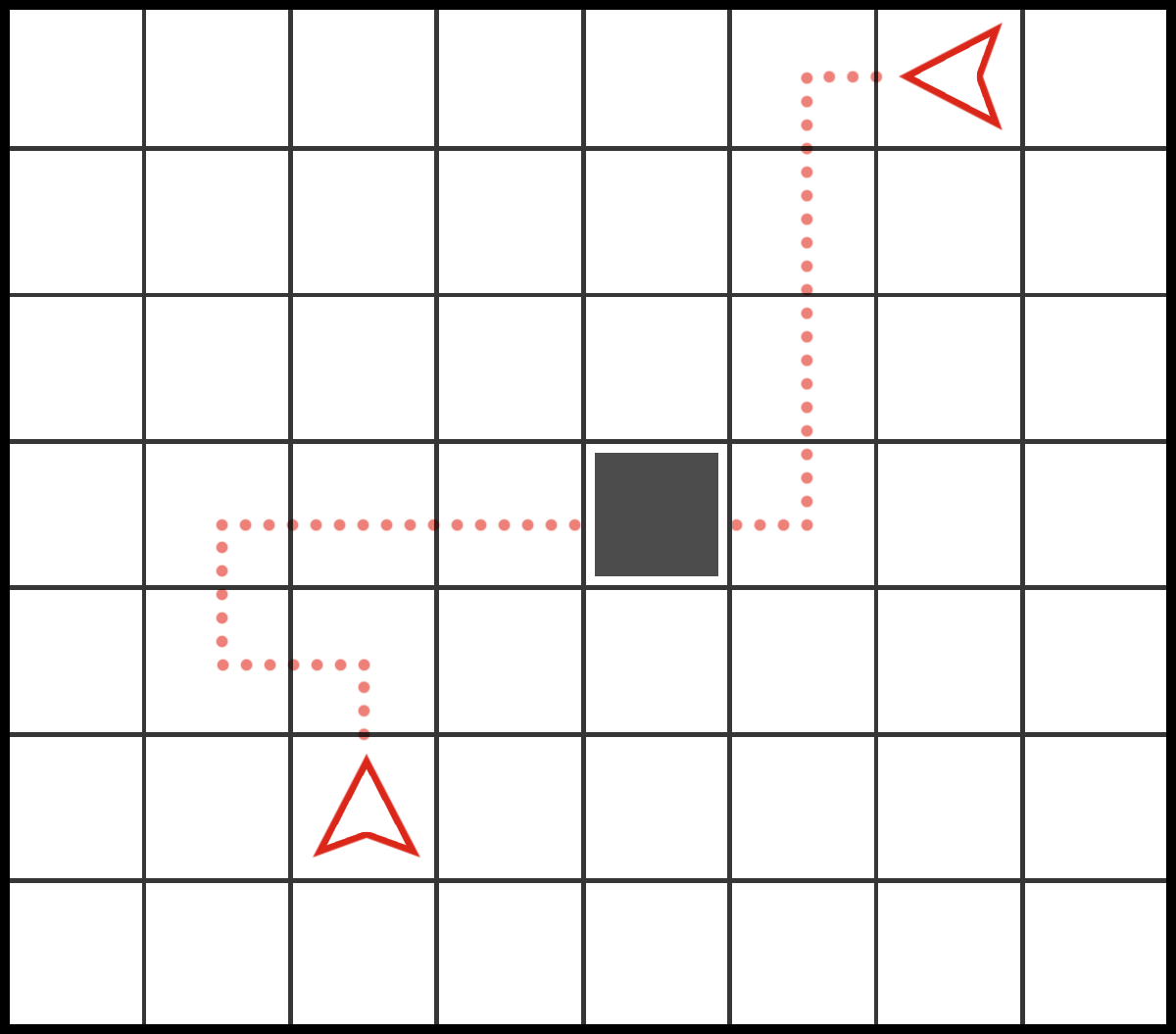}
    \caption{Two randomly-placed agents (triangles) must simultaneously arrive at the goal (square).}
    \label{fig:simple-grid-game}
\end{wrapfigure}%

We propose a general method for efficient MARL exploration by \emph{sharing experience} amongst agents. Consider the simple multi-agent game shown in \Cref{fig:simple-grid-game} in which two agents must simultaneously arrive at a goal. This game presents a difficult exploration problem, requiring the agents to wander for a long period before stumbling upon a reward. When the agents finally succeed, the idea of sharing experience is appealing: both agents can learn how to approach the goal from two different directions after a successful episode by leveraging their collective experience. Such experience sharing facilitates a steady progression of all learning agents, meaning that agents improve at approximately equal rates as opposed to diverging in their learning progress. We show in our experiments that this approach of experience sharing can lead to significantly faster learning and higher final returns.

We demonstrate this idea in a novel actor-critic MARL algorithm, called \emph{Shared Experience Actor-Critic} (SEAC).\footnote{We provide open-source implementations of SEAC in \url{www.github.com/uoe-agents/seac} and our two newly developed multi-agent environments: \url{www.github.com/uoe-agents/lb-foraging} (LBF) and  \url{www.github.com/uoe-agents/robotic-warehouse} (RWARE).} SEAC operates similarly to independent learning~\cite{Tan1993Multi-AgentAgents} but updates the actor and critic parameters of an agent by combining gradients computed on the agent's experience with 
weighted gradients computed on other agents' experiences. We evaluate SEAC in four sparse-reward multi-agent environments and find that it learns substantially faster (up to 70\% fewer required training steps) and achieves higher final returns compared to several baselines, including: independent learning without experience sharing; using data from all agents to train a single shared policy; and MADDPG~\cite{LoweMulti-AgentEnvironments}, QMIX~\cite{Rashid2018QMIX:Learning}, and ROMA~\cite{wang2020roma}. Sharing experience with our implementation of SEAC increased running time by less than 3\% across all environments compared to independent learning.

\section{Related Work}\label{sec:related_work}

\textbf{Centralised Training with Decentralised Execution:} The prevailing MARL paradigm of centralised training with decentralised execution (CTDE)~\cite{Oliehoek2008OptimalPOMDPs, Rashid2018QMIX:Learning, LoweMulti-AgentEnvironments} assumes a training stage during which the learning algorithm can access data from all agents to learn decentralised (locally-executable) agent policies. CTDE algorithms such as MADDPG~\cite{LoweMulti-AgentEnvironments} and COMA~\cite{Foerster2018CounterfactualGradients} learn powerful critic networks conditioned on joint observations and actions of all agents. A crucial difference to SEAC is that algorithms such as MADDPG 
only reinforce an agent's own tried actions, while SEAC uses shared experience to reinforce good actions tried by any agent, without learning the more complex joint-action critics. Our experiments show that MADDPG was unable to learn effective policies in our sparse-reward environments while SEAC learned successfully in most cases.

\textbf{Agents Teaching Agents:} There are approaches to leverage expertise of \textit{teacher} agents to address the issue of sample complexity in training a \textit{learner} agent~\cite{da2020agents}. Such teaching can be regarded as a form of transfer learning~\cite{pan2009survey} among RL agents. The \textit{teacher} would either implicitly or explicitly be asked to evaluate the behaviour of the \textit{learner} and send instructions to the other agent. Contrary to our work, most such approaches do focus on single-agent RL~\cite{clouse1996learning,fachantidis2019learning}. However, even in such teaching approaches for multi-agent systems~\cite{da2017simultaneously,da2018autonomously} experience is shared in the form of knowledge exchange following a teacher-learner protocol. Our approach shares agent trajectories for learning and therefore does not rely on the exchange of explicit queries or instructions, introducing minimal additional cost.

\textbf{Learning from Demonstrations:} Training agents from trajectories~\cite{schaal1997learning} of other agents~\cite{zimmer2014teacher} or humans~\cite{taylor2011integrating} is a common case of teaching agents. 
Demonstration data can be used to derive a policy~\cite{ho2016generative} which might be further refined using typical RL training~\cite{gao2018reinforcement} or to shape the rewards biasing towards previously seen expert demonstrations~\cite{brys2015reinforcement}. These approaches leverage expert trajectories to speed up or simplify learning for single-agent problems. 
In contrast, SEAC makes use of trajectories from other agents which are generated by concurrently learning agents in a multi-agent system. As such, we aim to speed up and synchronise training in MARL whereas learning from demonstrations focuses on using previously generated data for application in domains like robotics where generating experience samples is expensive.

\textbf{Distributed Reinforcement Learning:} Sharing experience among agents is related to recent work in distributed RL. These methods aim to effectively use large-scale computing resources for RL. Asynchronous methods such as A3C~\cite{Mnih2016AsynchronousLearning} execute multiple actors in parallel to generate trajectories more efficiently and break data correlations. Similarly, IMPALA~\cite{Espeholt2018IMPALA:Architectures} and SEED RL~\cite{espeholt2019seed} are off-policy actor-critic algorithms to distribute data collection across many actors with optimisation being executed on a single learner. Network parameters, observations or actions are exchanged after each episode or timestep, respectively, and off-policy correction is applied. 
However, all these approaches only share experience of multiple actors to speed up learning of a single RL agent and by breaking correlations in the data rather than addressing synchronisation and sample efficiency in MARL.


\textbf{Population-play:} Population-based training is another line of research aiming to improve exploration and coordination in MARL by training a population of diverse sets of agents~\cite{wang2018evolving,Leibo2019MalthusianLearning,jaderberg2019human,long2020evolutionary}. \textcite{Leibo2019MalthusianLearning} note the overall benefits on exploration when sets of agents are dynamically evolved and mixed. 
In their work, some agents share policy networks and are trained alongside evolving sets of agents. 
Similarly to \textcite{Leibo2019MalthusianLearning}, we observe benefits on exploration due to agents influencing each other's trajectories. However, SEAC is different from such population-play as it only trains a single set of distinct policies for all agents, thereby avoiding the significant computational cost involved in training multiple sets of agents.

\section{Technical Preliminaries}\label{sec:background}

\textbf{Markov Games:} We consider Markov games for $N$ agents (e.g.~\cite{Littman1994MarkovLearning}) with partial observability (also known as partially observable stochastic games, e.g.~\cite{hansen2004dynamic}) defined by the tuple $(\cN,\cS, \{O^i\}_{i\in \cN}, \{A^i\}_{i\in \cN}, \Omega, \cP, \{\cR^i\}_{i\in \cN})$, with agents $i\in\cN = \{1,\ldots,N\}$, state space $\cS$, joint observation space $\cO = O^1\times\ldots\times O^N$, and joint action space $\cA = A^1\times\ldots\times A^N$. 
Each agent $i$ only perceives local observations $o^i \in O^i$ which depend on the state and applied joint action via observation function $\Omega: \cS \times \cA \mapsto \Delta(\cO)$. The function $\cP: \cS \times \cA \mapsto \Delta(\cS)$ returns a distribution over successor states given a state and a joint action. $\cR^i: \cS \times \cA \times \cS \mapsto \mathbb{R}$ is the reward function for agent $i$ which gives its individual reward $r^i$. 
The objective is to find policies $\pi = (\pi_1, ..., \pi_N)$ for all agents such that the discounted return of each agent $i$, $G^i=\sum^T_{t=0}{\gamma^tr_t^i}$, is maximised with respect to other policies in $\pi$, formally $\forall_i : \pi_i \in \arg\max_{\pi'_i} \mathbb{E}[G^i | \pi'_i,\pi_{-i}]$ where $\pi_{-i} = \pi \setminus \{\pi_i\}$, 
$\gamma$ is the discount factor, and $T$ the total timesteps of an episode.

In this work, we assume $O=O^1=\ldots=O^N$ and $A=A^1=\ldots=A^N$ in line with other recent works in MARL \cite{sunehag2017value,Rashid2018QMIX:Learning,Foerster2018CounterfactualGradients,mahajan2019maven}. However, in contrast to these works we do not require that agents have identical reward functions, as will be discussed in \Cref{sec:seac}.

\textbf{Policy Gradient and Actor-Critic:} Policy Gradient (PG) algorithms are a class of model-free RL algorithms that aim to directly learn a policy $\pi_\phi$ parameterised by $\phi$, that maximises the expected returns. In REINFORCE~\cite{Williams1992SimpleLearning}, the simplest PG algorithm, this is accomplished by following the gradients of the objective $\grad_\phi J(\phi)=\E_\pi\left[G_t\grad_\phi{\ln{\pi_\phi(a_t|s_t)}}\right]$. Notably, the Markov property is not used, allowing the use of PG in partially observable settings. However, REINFORCE suffers from high variance of gradient estimation.
To reduce variance of gradient estimates, actor-critic (AC) algorithms estimate Monte Carlo returns using a value function $V_\pi(s; \theta)$ with parameters $\theta$. In a multi-agent, partially observable setting, the simplest AC algorithm defines a policy loss for agent $i$
\begin{equation}
    \label{eq:policyloss}
    \mathcal{L}(\phi_i) = -\log\pi(a_t^i|o_t^i;\phi_i)(r_t^i + \gamma V(o_{t+1}^i;\theta_i)-V(o_t^i;\theta_i))
\end{equation}
with a value function minimising
\begin{equation}
    \label{eq:valueloss}
    \mathcal{L}(\theta_i) = ||V(o_t^i; \theta_i) - y_i||^2 \text{ \ with \ } y_i = r_t^i + \gamma V(o_{t+1}^i;\theta_i)
\end{equation}

In practice, when $V$ and $\pi$ are parameterised by neural networks, sampling several trajectories in parallel, using $n$-step returns, regularisation, and other modifications can be beneficial~\cite{Mnih2016AsynchronousLearning}. 
To simplify our descriptions, our methods in \Cref{sec:seac} will be described only as extensions of \Cref{eq:policyloss,eq:valueloss}. In our experiments we use a modified AC algorithm as described in \Cref{sec:experiments:algorithm}.


\section{Shared Experience Actor-Critic}\label{sec:seac}



Our goal is to enable more efficient learning by sharing experience among agents. To facilitate experience sharing, we assume environments in which the local policy gradients of agents provide useful learning directions for all agents. Intuitively, this means that agents can learn from the experiences of other agents without necessarily having identical reward functions. Examples of such environments can be found in \Cref{sec:experiments}.

In each episode, each agent generates one on-policy trajectory.
Usually, when on-policy training is used, RL algorithms only use the experience of each agent's own sampled trajectory to update the agent's networks with respect to \Cref{eq:policyloss}. Here, we propose to also use trajectories of other agents while considering that it is \emph{off-policy} data, i.e.\ the trajectories are generated by agents executing different policies than the one optimised. Correcting for off-policy samples requires importance sampling. The loss for such off-policy policy gradient optimisation from a behavioural policy $\beta$ can be written as
\begin{equation}
    \label{eq:offpolicy.policyloss}
    \grad_{\phi}\mathcal{L}(\phi) = -\frac{\pi(a_t|o_t; \phi)}{\beta(a_t|o_t)}\grad_\phi\log\pi(a_t|o_t;\phi)(r_t + \gamma V(o_{t+1};\theta)-V(o_t;\theta))
\end{equation}
In the AC framework of \Cref{sec:background}, we can extend the policy loss to use the agent's own trajectories (denoted with $i$) along with the experience of other agents (denoted with $k$), shown below: 
\begin{align}
    \begin{split}
        \label{eq:policyloss_shared}
        \mathcal{L}(\phi_i) = &-\log\pi(a_t^i|o_t^i;\phi_i)(r_t^i + \gamma V(o_{t+1}^i;\theta_i)-V(o_t^i;\theta_i)) \\
        &- \lambda\sum_{k\neq i}{\frac{\pi(a_t^k|o_t^k; \phi_i)}{\pi(a_t^k|o_t^k; \phi_k)}\log\pi(a_t^k|o_t^k; \phi_i)(r_t^k + \gamma V(o_{t+1}^k;\theta_i)-V(o_t^k;\theta_i)) }
    \end{split}
\end{align}
Using this loss function, each agent is trained on both on-policy data while also using the off-policy data collected by all other agents at each training step. The value loss, in a similar fashion, becomes
\begin{equation}
    \begin{aligned}
    \label{eq:valueloss_shared}
        \mathcal{L}(\theta_i) &= ||V(o_t^i; \theta_i) - y_i^i||^2 + \lambda\sum_{k\neq i}\frac{\pi(a_t^k|o_t^k; \phi_i)}{\pi(a_t^k|o_t^k; \phi_k)}||V(o_t^k; \theta_i) - y_k^i||^2 \\
        y_k^i &= r_t^k + \gamma V(o_{t+1}^k;\theta_i)
    \end{aligned}
\end{equation}

We show how to derive the losses in \Cref{eq:policyloss_shared,eq:valueloss_shared} for the case of two agents in \Cref{appendix:loss_derivation} (generalisation to more agents is possible). The hyperparameter $\lambda$ weights the experience of other agents; we found SEAC to be largely insensitive to values of $\lambda$ and use $\lambda=1$ in our experiments. A sensitivity analysis can be found in \Cref{sec:additional_details}. We refer to the resulting algorithm as \emph{Shared Experience Actor-Critic} (SEAC) and provide pseudocode in \Cref{alg:SEAC}.

Due to the random weight initialisation of neural networks, each agent is trained from experience generated from different policies, leading to more diverse exploration. Similar techniques, such as annealing $\epsilon$-greedy policies to different values of $\epsilon$, have been observed~\cite{Mnih2016AsynchronousLearning} to improve the performance of algorithms.

\begin{algorithm}[t]
    \caption{Shared Experience Actor-Critic Framework}
    \label{alg:SEAC}
    \begin{algorithmic}
        \For{timestep $t=1\dots$}
            \State Observe $o^1_t\dots o^N_t$
            \State Sample actions $a_t^1, \dots, a_t^N$ from $P(o^1_t;\phi_1),\dots,P(o^N_t;\phi_N)$
            \State Execute actions and observe $r_t^1,\dots,r_t^N$ and $o_{t+1}^1,\dots,o_{t+1}^N$
            \For {agent $i=1\dots N$}
                \State Perform gradient step on $\phi_i$ by minimising \cref{eq:policyloss_shared}
                \State Perform gradient step on $\theta_i$ by minimising \cref{eq:valueloss_shared}
            \EndFor
        \EndFor
    \end{algorithmic}
\end{algorithm}

It is possible to apply a similar concept of experience sharing to off-policy deep RL methods such as DQN~\cite{mnih2015human}. We provide a description of experience sharing with DQN in \Cref{sec:appendix_seql}. Since DQN is an off-policy algorithm, experience generated by different policies can be used for optimisation without further considerations such as importance sampling. However, we find deep off-policy methods to exhibit rather unstable learning~\cite{vanHasselt2018DeepTriad} compared to on-policy AC. We consider the generality of our method a strength, and believe it can improve other multi-agent algorithms (e.g. AC with centralised value function).

\section{Experiments}\label{sec:experiments}

We conduct experiments on four sparse-reward multi-agent environments and compare SEAC to two baselines as well as three state-of-the-art MARL algorithms: MADDPG~\cite{LoweMulti-AgentEnvironments}, QMIX~\cite{Rashid2018QMIX:Learning} and ROMA~\cite{wang2020roma}.

\subsection{Environments}
\label{sec:experiments:environments}

The following multi-agent environments were used in our evaluation. More detailed descriptions of these environments can be found in \Cref{apdx:envs}.

\textbf{Predator Prey (PP), \cref{fig:envs:predatorprey}:} First, we use the popular PP environment adapted from the Multi-agent Particle Environment framework~\cite{LoweMulti-AgentEnvironments}. In our sparse-reward variant, three predator agents must catch a prey by coordinating and approaching it simultaneously. The prey is a slowly moving agent that was pretrained with MADDPG and dense rewards to avoid predators. If at least two predators are adjacent to the prey, then they succeed and each receive a reward of one. Agents are penalised for leaving the bounds of the map, but otherwise receive zero reward.

\textbf{Starcraft Multi-Agent Challenge (SMAC), \cref{fig:envs:smac}:} The SMAC~\cite{Samvelyan2019TheChallenge} environment was used in several recent MARL works~\cite{Rashid2018QMIX:Learning, Foerster2018CounterfactualGradients,wang2020roma}. SMAC originally uses dense reward signals and is primarily designed to test solutions to the multi-agent credit assignment problem. We present experiments on a simple variant that uses sparse rewards. In this environment, agents have to control a team of marines each represented by a single agent, to fight against an equivalent team of marines controlled by the game AI. With sparse rewards agents receive a single non-zero reward at the final timestep of each episode: a victory rewards $1$, while a defeat $-1$.

\textbf{Level-Based Foraging (LBF), \cref{fig:envs:foraging}:} LBF~\cite{Albrecht2013ASystems,as2017aamas} is a mixed cooperative-competitive game which focuses on the coordination of the agents involved. Agents of different skill levels navigate a grid world and collect foods by cooperating with other agents if required. Four tasks of this game will be tested, with a varied number of agents, foods, and grid size. Also, a cooperative variant will be tested. The reported returns are the fraction of items collected in every episode.

\textbf{Multi-Robot Warehouse (RWARE), \cref{fig:envs:rware}:}  This multi-agent environment (similar to the one used in~\cite{albrecht2016exploiting}) simulates robots that move goods around a warehouse, similarly to existing real-world applications~\cite{Wurman2008}. The environment requires agents (circles) to move requested shelves (coloured squares) to the goal posts (letter `G') and back to an empty location. It is a partially-observable collaborative environment with a very sparse reward signal, since agents have a limited view area and are rewarded only upon successful delivery. In the results, we report the total returns given by the number of deliveries over an episode of $500$ timesteps on four different tasks in this environment.

\begin{figure}[t]
    \subfigure[t]{0.24\textwidth}%
        \includegraphics[width=\linewidth]{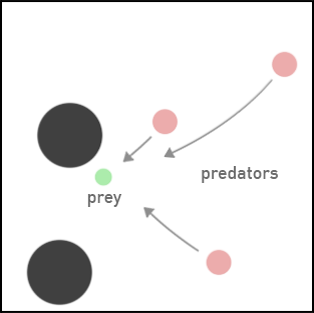}
        \caption{Predator Prey}\label{fig:envs:predatorprey}
    \endsubfigure\hfill
    \subfigure[t]{0.24\textwidth}
        \includegraphics[width=\linewidth]{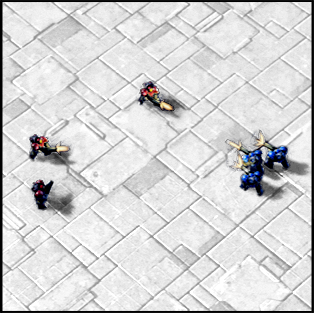}
        \caption{SMAC - 3 marines}\label{fig:envs:smac}
    \endsubfigure\hfill
    \subfigure[t]{0.24\textwidth}
        \includegraphics[width=\linewidth]{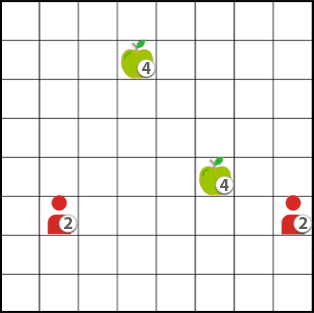}
        \caption{Level-Based Foraging}\label{fig:envs:foraging}
    \endsubfigure\hfill
    \subfigure[t]{0.24\textwidth}
        \includegraphics[width=\linewidth]{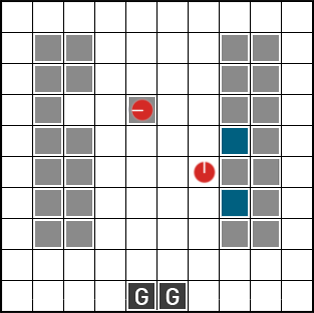}
        \caption{RWARE ($10 \times 11$)}\label{fig:envs:rware}
    \endsubfigure\hfill
    \caption{Environments used in our evaluation. Controlled agents are coloured red.}\label{fig:envs}
\end{figure}

\subsection{Baselines}\label{sec:baselines}


\textbf{Independent Actor-Critic (IAC):}  We compare SEAC to independent learning~\cite{Tan1993Multi-AgentAgents}, in which each agent has its own policy network and is trained separately only using its own experience. IAC uses an actor-critic algorithm for each agent, directly optimising \cref{eq:valueloss,eq:policyloss}; and treating other agents as part of the environment. Arguably, independent learning is one of the most straightforward approaches to MARL and serves as reasonable baseline due to its simplicity.


\textbf{Shared Network Actor-Critic (SNAC):} We also compare SEAC to training a single shared policy among all agents. During execution of the environment, each agents gets a copy of the policy and individually follows it. During training, the sum of policy and value loss gradients are used to optimise the shared parameters. Importance sampling is not required since all trajectories are on-policy. 
Improved performance of our SEAC method would raise the question whether agents simply benefit from processing more data during training. Comparing against this baseline can also show that agents trained using experience sharing are not learning identical policies but instead learn different ones despite being trained on the same collective experience.


\subsection{Algorithm Details}\label{sec:experiments:algorithm}

For all tested algorithms, we implement AC using n-step returns and synchronous environments~\cite{Mnih2016AsynchronousLearning}.
Specifically, 5-step returns were used and four environments were sampled and passed in batches to the optimiser. An entropy regularisation term was added to the final policy loss~\cite{Mnih2016AsynchronousLearning}, computing the entropy of the policy of each individual agent. Hence, the entropy term of agent $i$, given by $H(\pi(o_t^i; \phi_i))$, only considers its own policy.
High computational requirements in terms of environment steps only allowed hyperparameter tuning for IAC on RWARE; all tested AC algorithms use the same hyperparameters (see \cref{sec:additional_details}).
All results presented are averaged across five seeds, with the standard deviation plotted as a shaded area.

\subsection{Results}

\begin{figure}
    \centering
    \subfigure[t]{0.5\linewidth}
        \includegraphics[width=\linewidth]{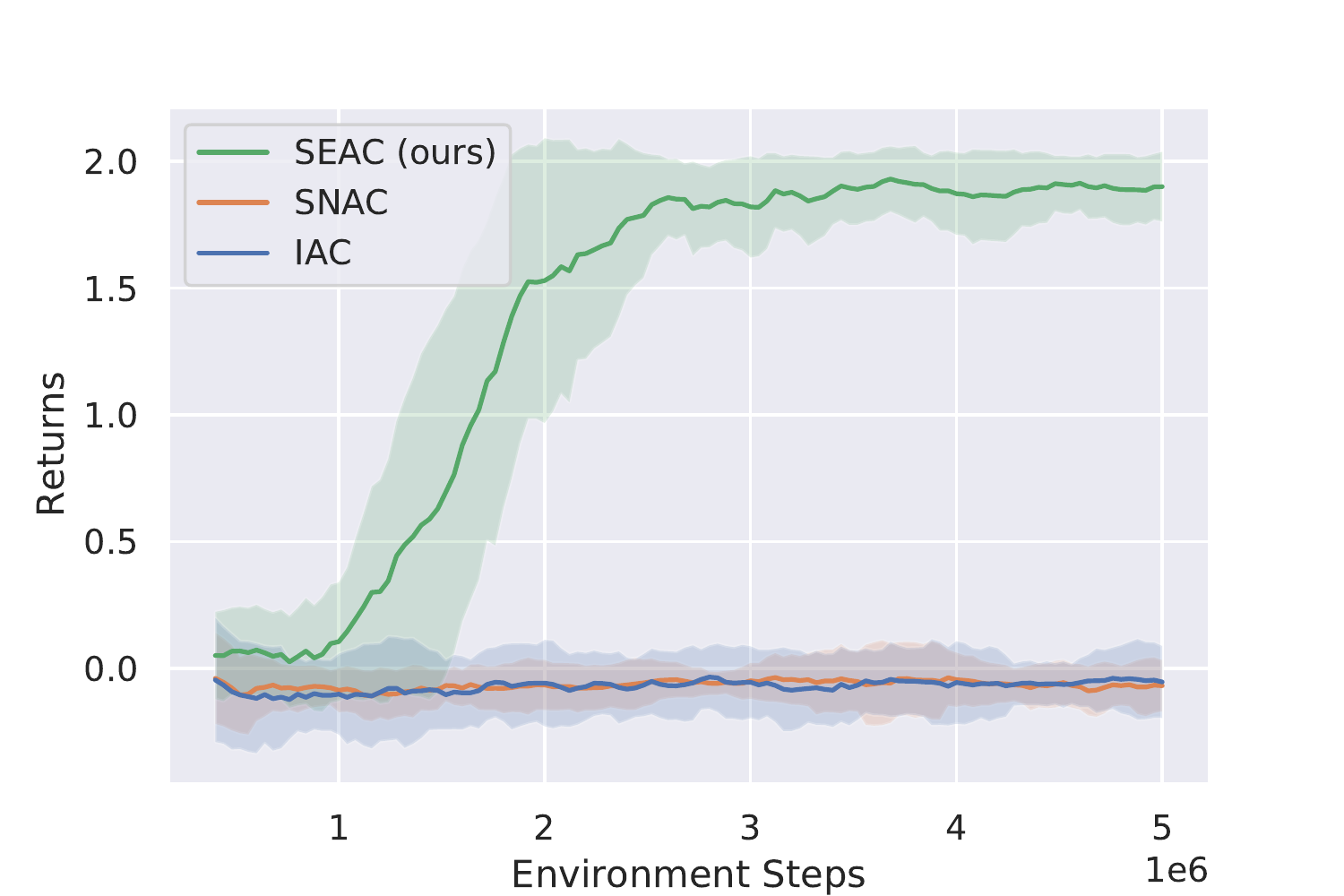}
        \caption{PP, sparse rewards}\label{fig:results:prey}
    \endsubfigure\hfill
    \subfigure[t]{0.5\linewidth}
        \includegraphics[width=\linewidth]{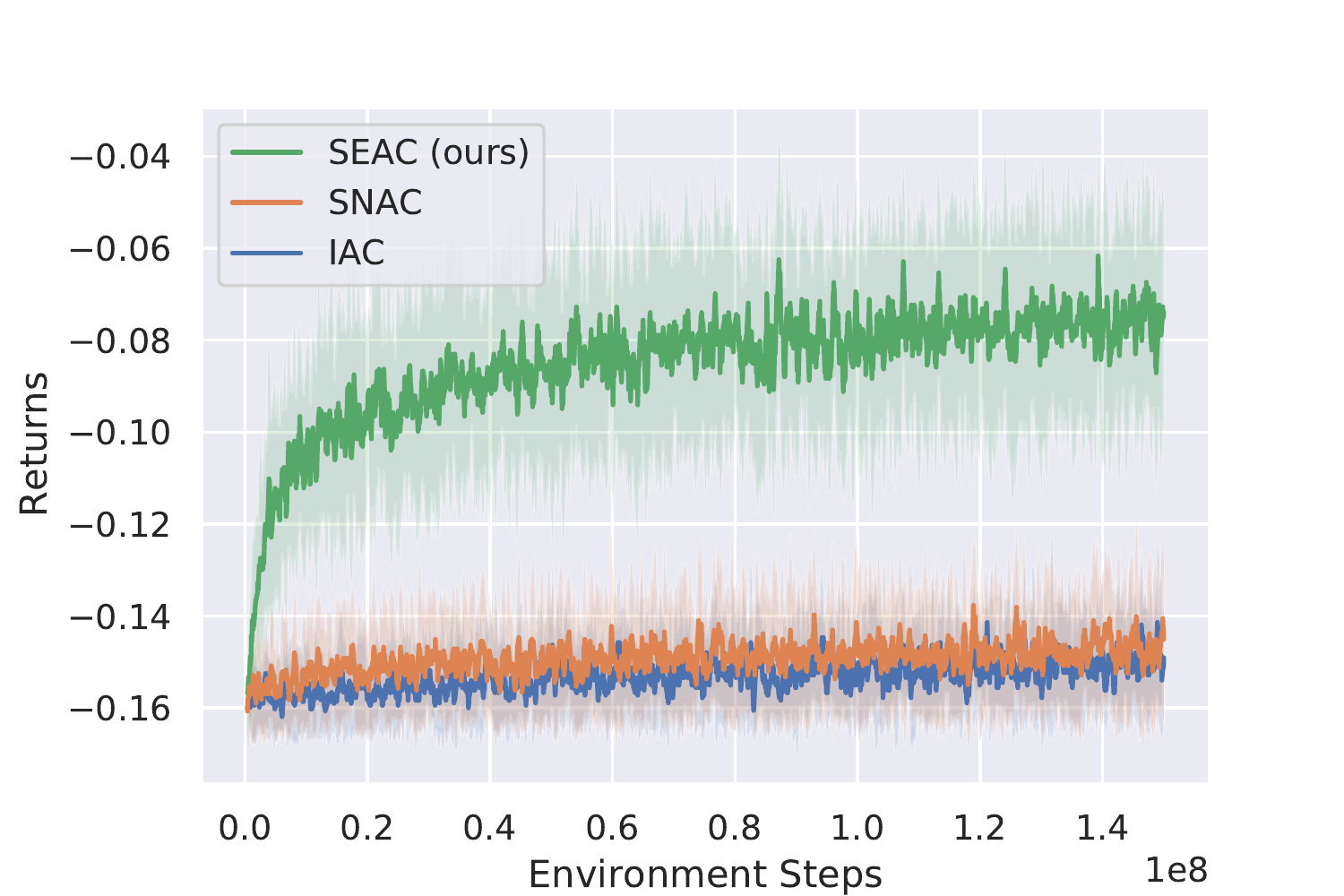}   
        \caption{SMAC with three marines, sparse rewards}\label{fig:results:smac-sparse}
    \endsubfigure\hfill
    \caption{Mean training returns across seeds for sparse-reward variations of PP and SMAC-3m.}\label{fig:results:smac}
\end{figure}

\Cref{fig:results:smac,fig:results:lbf,fig:results:rware,} show the training curves of SEAC, SNAC and IAC for all tested environments. For RWARE and LBF, tasks are sorted from easiest to hardest. 

In the sparse PP task (\Cref{fig:results:prey}) only SEAC learns successfully with consistent learning across seeds while IAC and SNAC are unable to learn to catch the prey.

In SMAC with sparse rewards (\Cref{fig:results:smac-sparse}), SEAC outperforms both baselines. However, with mean returns close to zero, the agents have not learned to win the battles but rather to run away from the enemy. This is not surprising since our experiments (\Cref{tab:sota}) show that even state-of-the-art methods designed for these environments (e.g. QMIX) do not successfully solve this sparsely rewarded task.

For LBF (\Cref{fig:results:lbf}), no significant difference can be observed between SEAC and the two baseline methods IAC and SNAC for the easiest variant (\Cref{fig:results:foraging1}) which does not emphasise exploration. However, as the rewards become sparser the improvement becomes apparent. For increased number of agents, foods and gridsize (\Crefrange{fig:results:foraging2}{fig:results:foraging-coop}), IAC and SNAC converge to significantly lower average returns than SEAC. In the largest grid (\Cref{fig:results:foraging3}), IAC does not show any signs of learning due to the sparsity of the rewards whereas SEAC learns to collect some of the food. We also observe that SEAC tends to converge to its final policy in fewer timesteps than IAC.

\begin{figure}
    \centering
    \subfigure[t]{0.5\linewidth}
        \includegraphics[width=\linewidth]{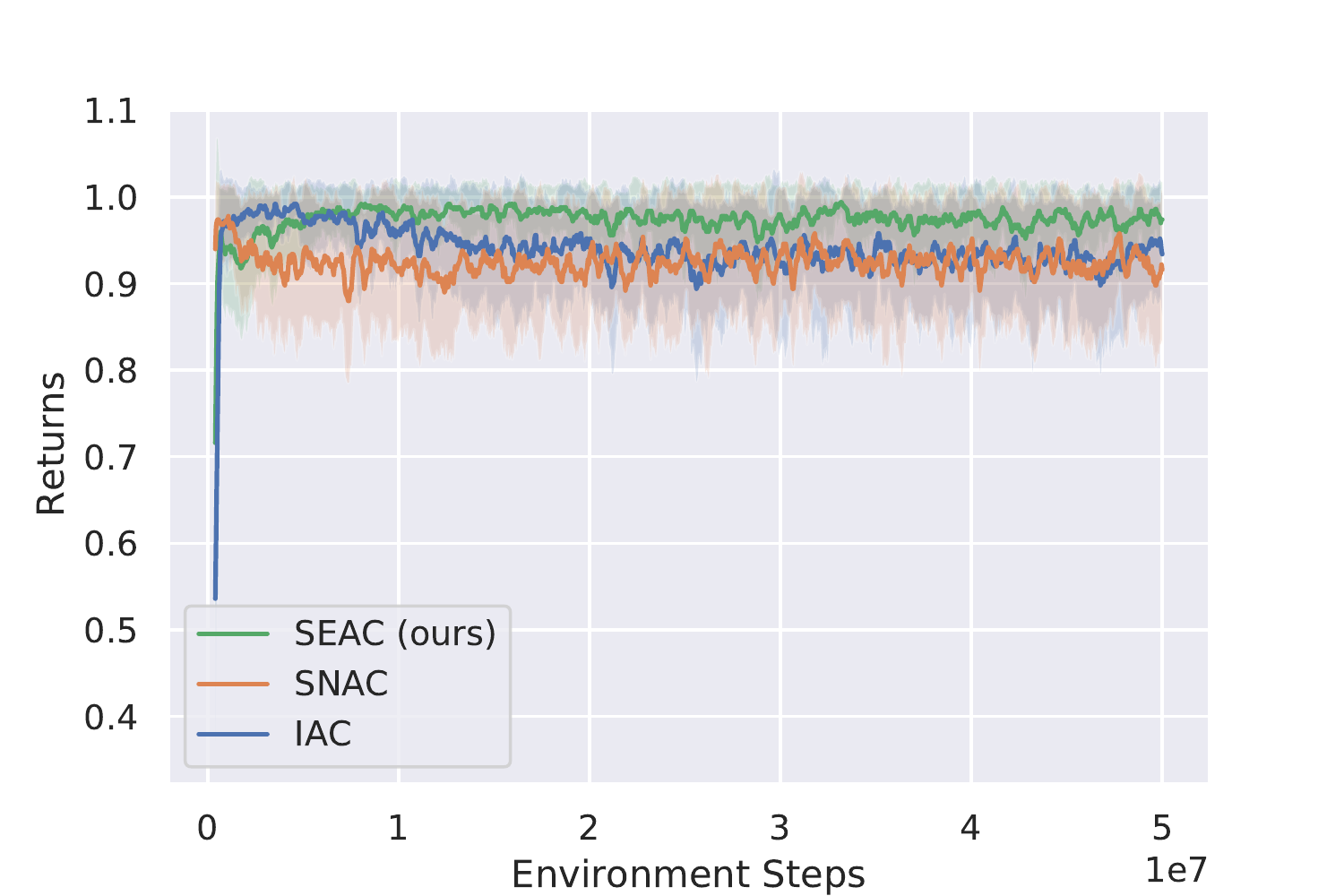}
        \caption{LBF: ($12\times12$), two agents, one food}\label{fig:results:foraging1}
    \endsubfigure\hfill
    \subfigure[t]{0.5\linewidth}
        \includegraphics[width=\linewidth]{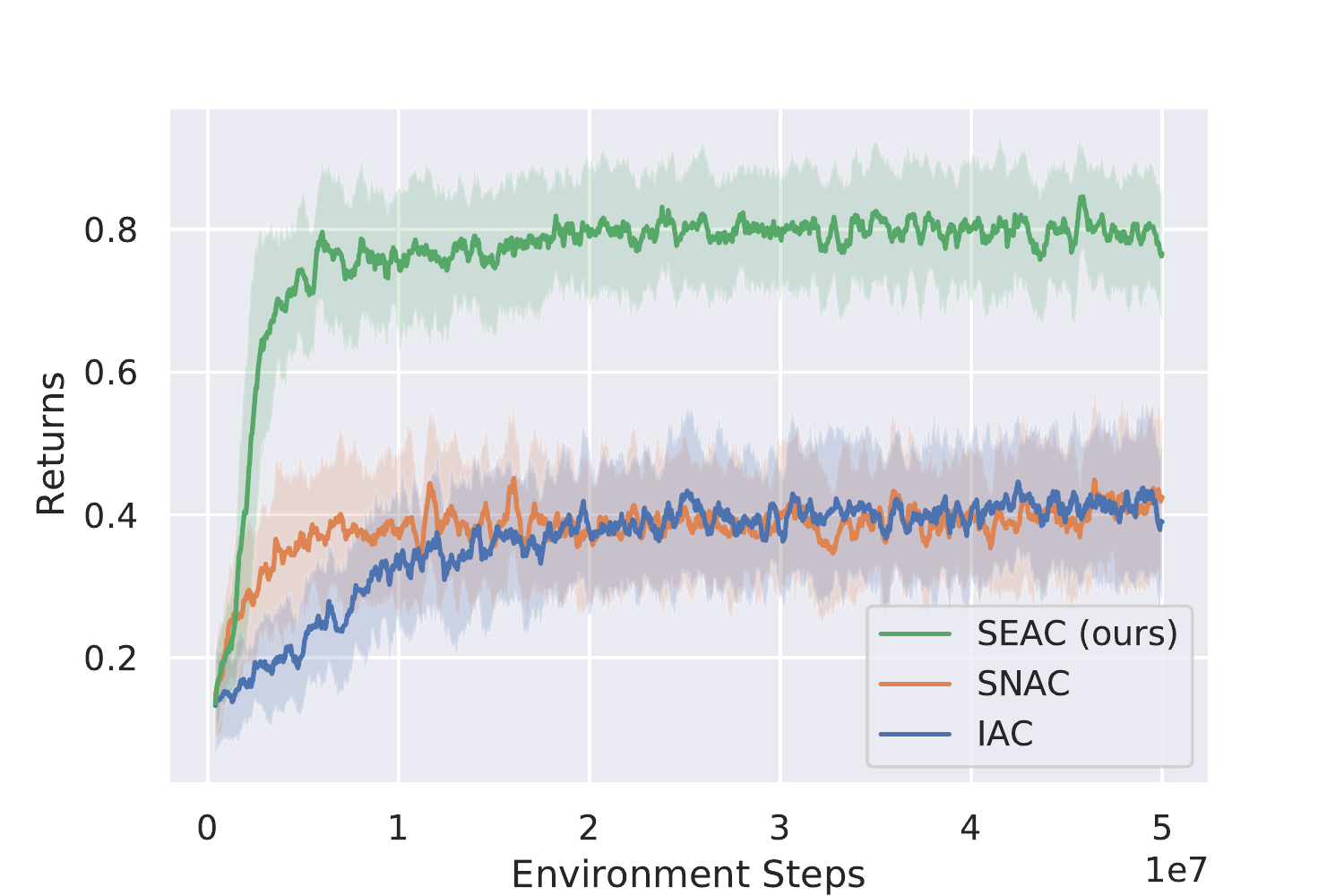}
        \caption{LBF: ($10\times10$), three agents, three foods}\label{fig:results:foraging2}
    \endsubfigure\hfill
    \subfigure[t]{0.5\linewidth}
        \includegraphics[width=\linewidth]{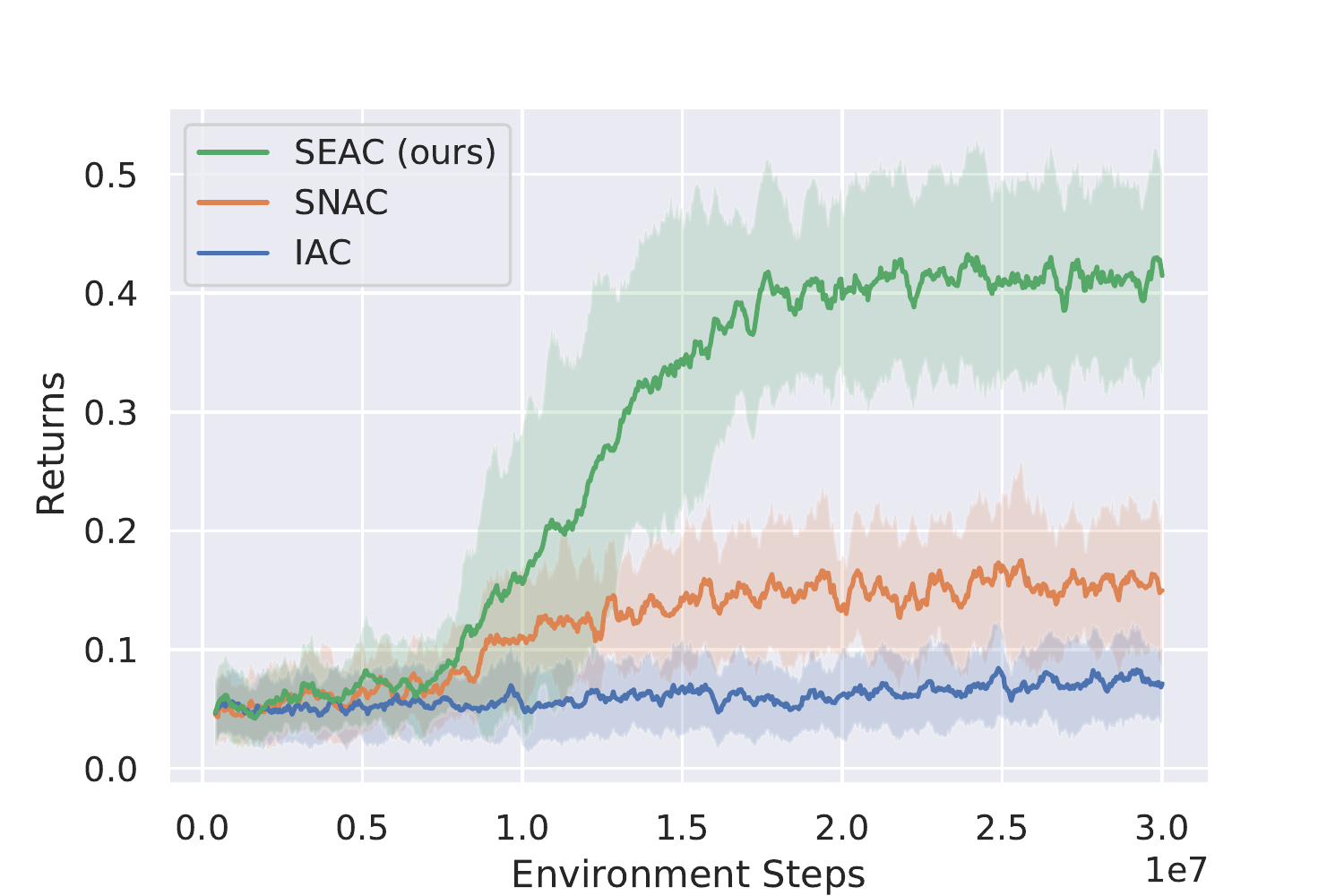}
        \caption{LBF: ($15\times15$), three agents, four food}\label{fig:results:foraging3}
    \endsubfigure
    \subfigure[t]{0.5\linewidth}
        \includegraphics[width=\linewidth]{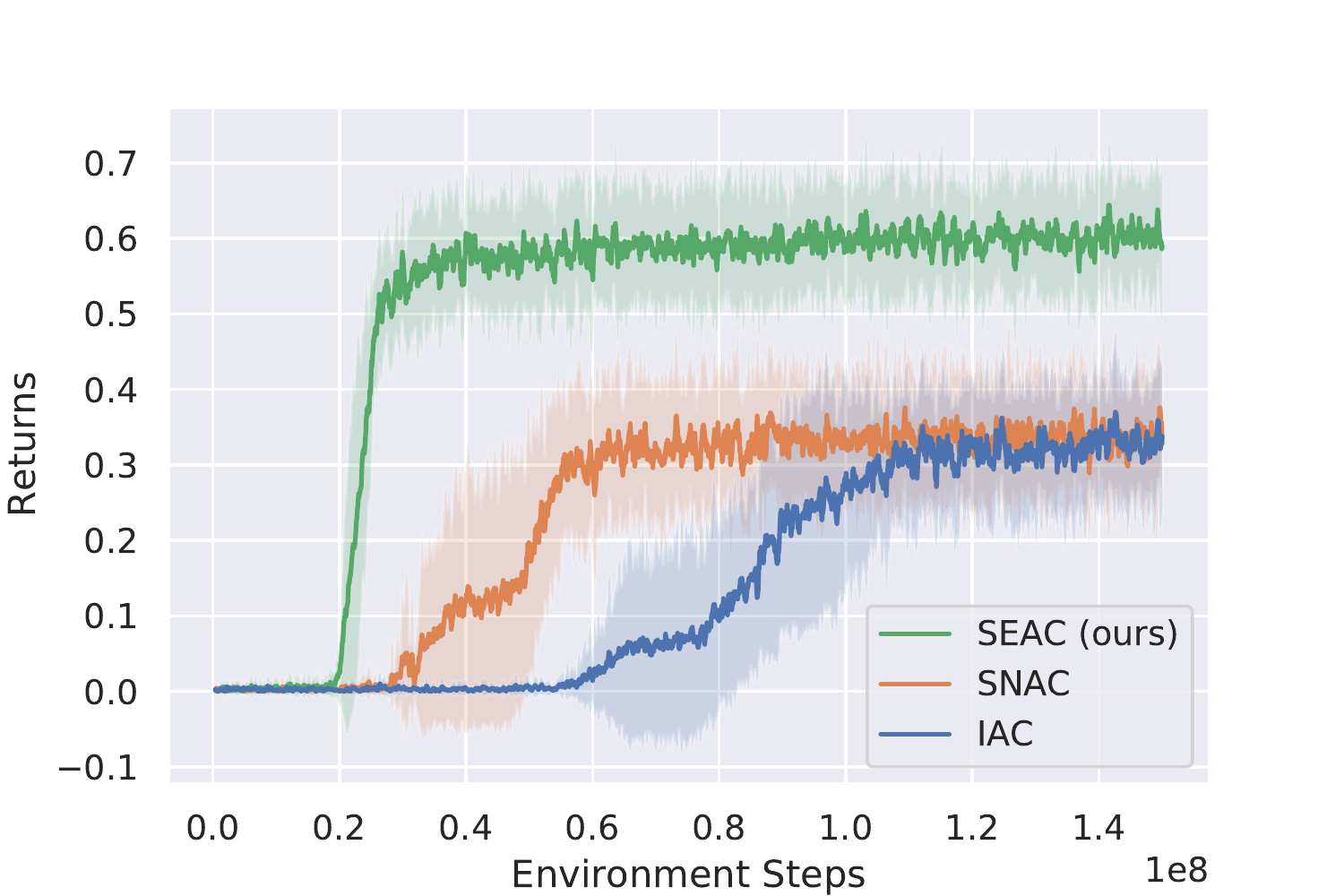}
        \caption{LBF: ($8\times8$), two agents, two foods, cooperative}\label{fig:results:foraging-coop}
    \endsubfigure
    \caption{Mean training returns across seeds on LBF. Tasks are sorted from easiest to hardest.}
    \label{fig:results:lbf}
\end{figure}

\begin{figure}
    \centering
    \subfigure[t]{0.5\linewidth}
        \includegraphics[width=\linewidth]{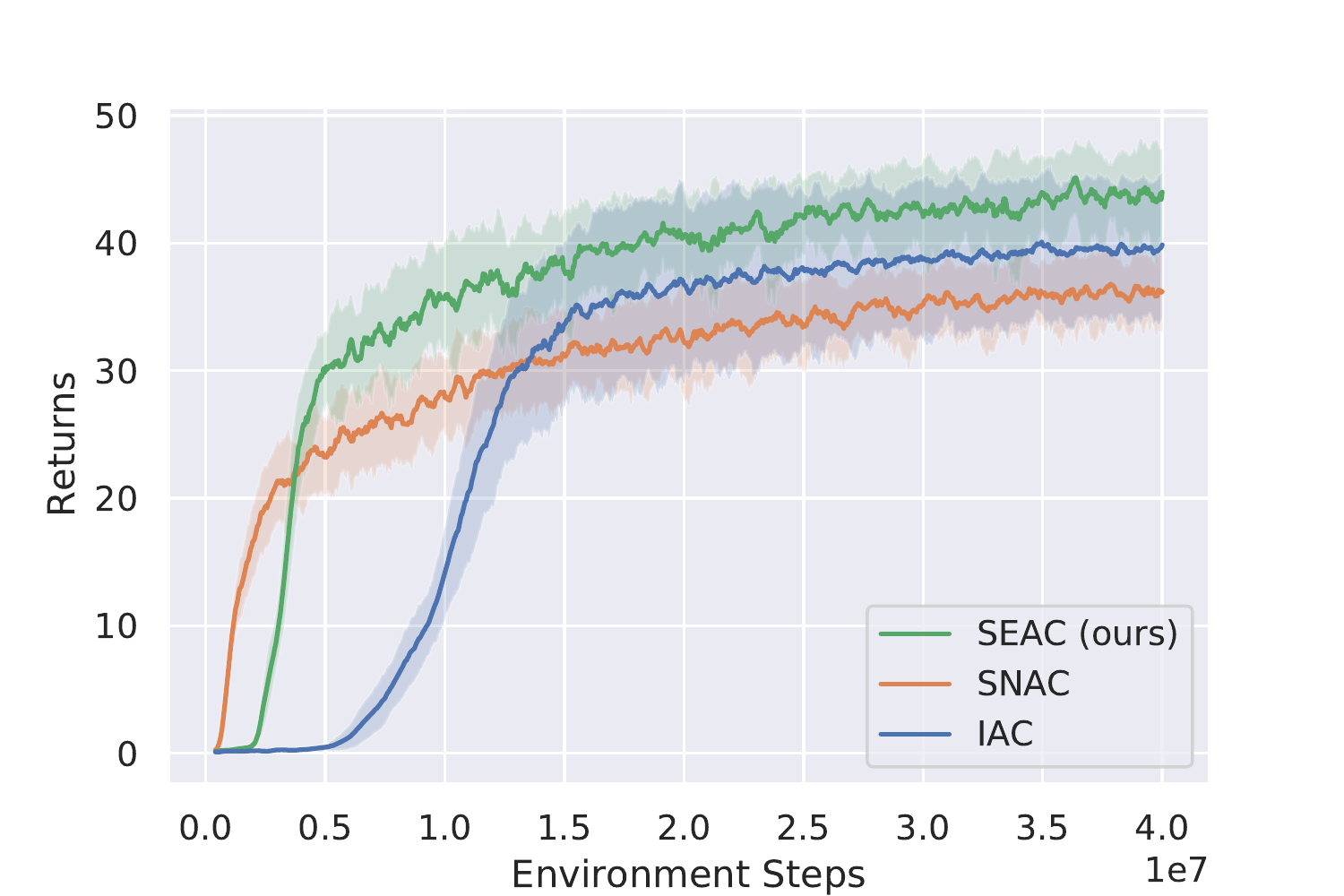}
        \caption{RWARE: ($10\times11$), four agents}\label{fig:results:tiny4}
    \endsubfigure\hfill
    \subfigure[t]{0.5\linewidth}
        \includegraphics[width=\linewidth]{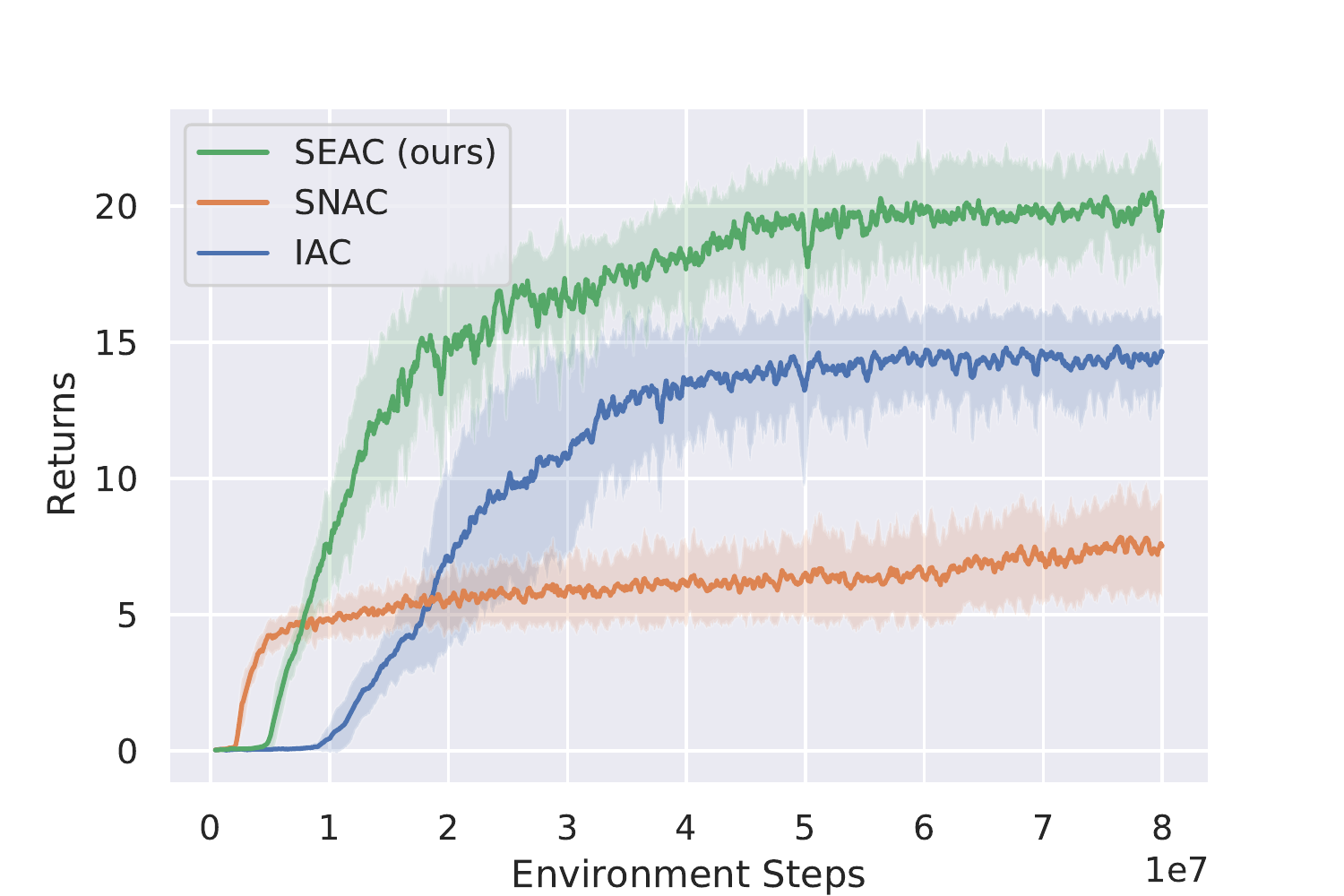}
        \caption{RWARE: ($10\times11$), two agents}\label{fig:results:tiny2}
    \endsubfigure\hfill
    \subfigure[t]{0.5\linewidth}
        \includegraphics[width=\linewidth]{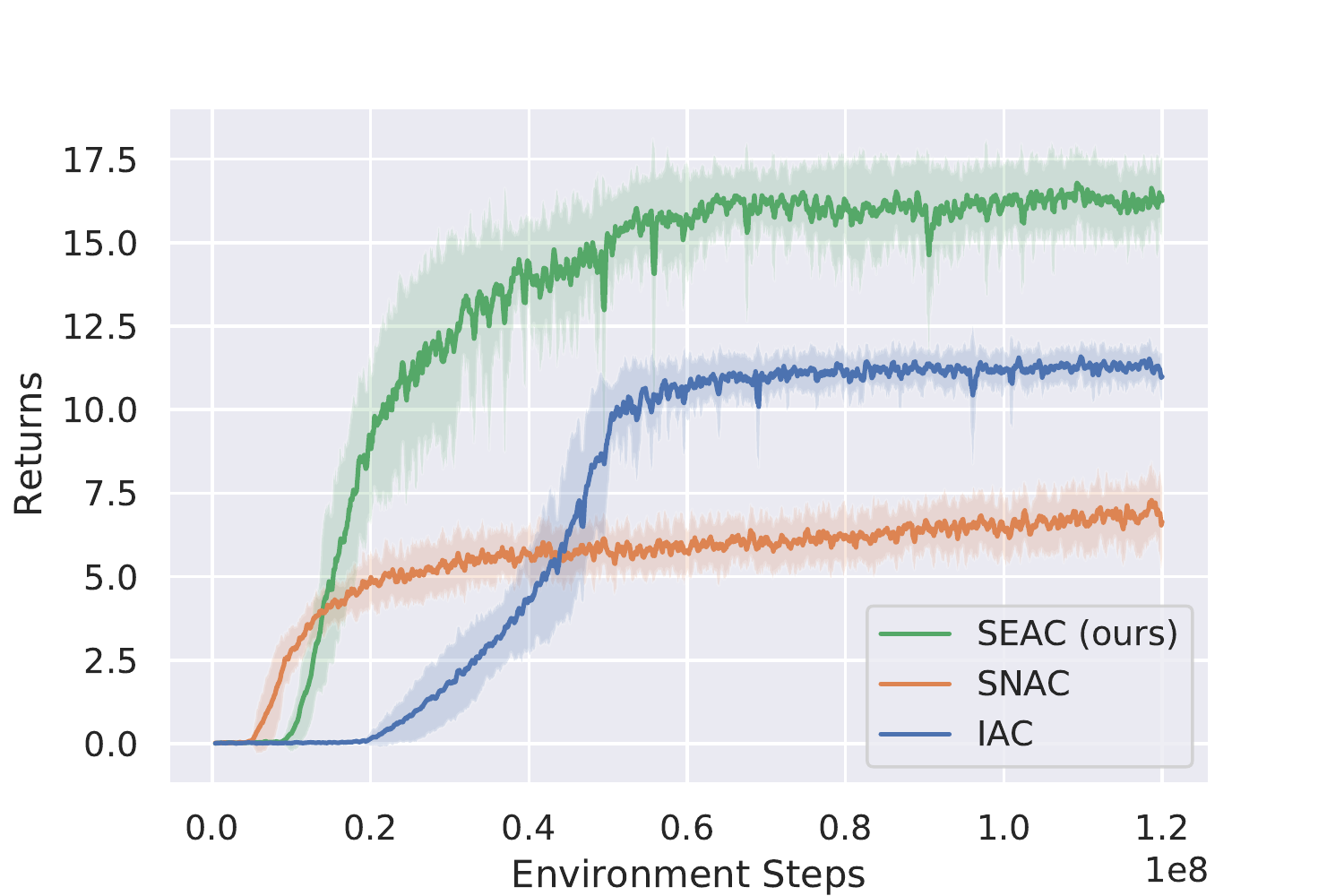}
        \caption{RWARE: ($10\times11$), two agents, hard}\label{fig:results:tiny2hard}
    \endsubfigure
    \subfigure[t]{0.5\linewidth}
        \includegraphics[width=\linewidth]{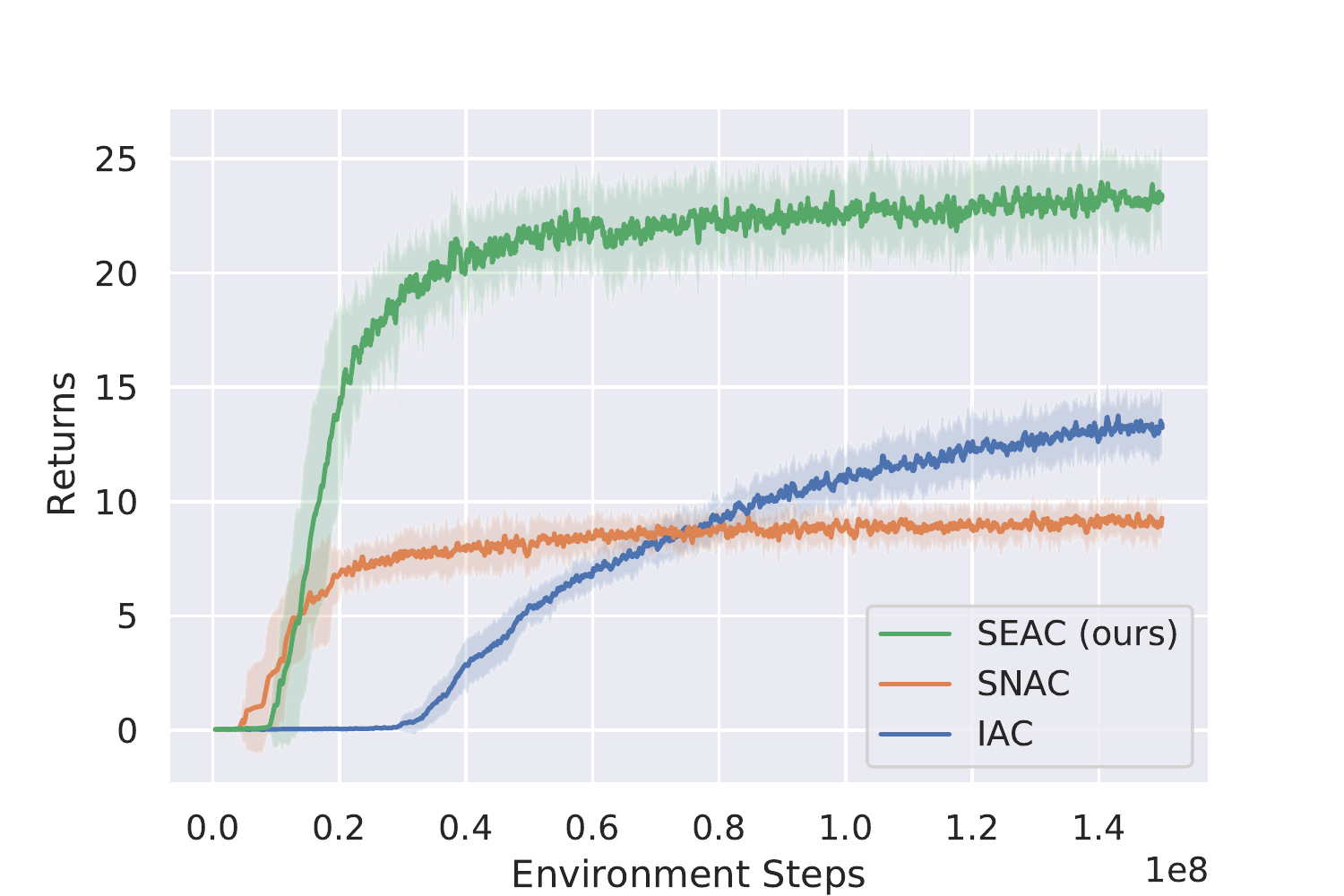}
        \caption{RWARE: ($10\times20$), four agents}\label{fig:results:small}
    \endsubfigure
    \caption{Mean training returns across seeds on RWARE. Tasks are sorted from easiest to hardest.}
    \label{fig:results:rware}
\end{figure}

In RWARE (\Cref{fig:results:rware}), results are similar to LBF. Again, the two baseline methods IAC and SNAC converge to lower average returns than SEAC for harder tasks. In the hardest task (\Cref{fig:results:small}), SEAC converges to final mean returns $\approx70\%$ and $\approx160\%$ higher than IAC and SNAC, respectively, and again converges in fewer steps than IAC.

In \Cref{tab:sota} we also present the final evaluation returns of three state-of-the-art MARL algorithms (QMIX~\cite{Rashid2018QMIX:Learning}, MADDPG~\cite{LoweMulti-AgentEnvironments}, and ROMA~\cite{wang2020roma}) on a selection of tasks. These algorithms show no signs of learning in most of these tasks. The only exceptions are MADDPG matching the returns of SEAC in the sparse PP task and QMIX performing comparably to SEAC in the cooperative LBF task. QMIX and ROMA assume tasks to be fully-cooperative, i.e.\ all agents receive the same reward signal. Hence, in order to apply the two algorithms, we modified non-cooperative environments to return the sum of all individual agent returns as the shared reward. While shared rewards could make learning harder, we also tested IAC in the easiest variant of RWARE and found that it learned successfully even with this reward setting.

We also evaluate \emph{Shared Experience Q-Learning}, as described in \Cref{sec:appendix_seql}, and Independent Q-Learning~\cite{Tan1993Multi-AgentAgents} based on DQN~\cite{mnih2015human}. In some sparse reward tasks, shared experience did reduce variance, and improved total returns. However, less impact has been observed through the addition of sharing experience to this off-policy algorithm compared to SEAC. Results can be found in \Cref{sec:appendix_seql}.

In terms of computational time, sharing experience with SEAC increased running time by less than $3\%$ across all environments compared to IAC. More details can be found in \Cref{sec:additional_details}.

\begin{table}
    \caption{Final mean evaluation returns across five random seeds with standard deviation on a selection of tasks. Highest means per task (within one standard deviation) are shown in bold.}\label{tab:sota}
    \resizebox{\textwidth}{!}{
    \begin{tabular}{@{}lrrrrrr@{}}
    \toprule
                             & IAC                   & SNAC          & SEAC (ours)          & QMIX                  & MADDPG                & ROMA\\ \midrule
    PP (sparse)              & -0.04 ±0.13           & -0.04 ±0.1    & \textbf{1.93 ±0.13}  & 0.05 ±0.07            & \textbf{2.04 ±0.08}   & 0.04 ±0.07  \\
    SMAC-3m (sparse)         & -0.13 ±0.01           & -0.14 ±0.02   & \textbf{-0.03 ±0.03} & \textbf{0.00 ±0.00}   &\textbf{-0.01 ±0.01}   & \textbf{0.00 ±0.00} \\
    LBF-(15x15)-3ag-4f       & 0.13 ±0.04            & 0.18 ±0.08    & \textbf{0.43 ±0.09}  & 0.03 ±0.01            & 0.01 ±0.02            & 0.03 ±0.02  \\
    LBF-(8x8)-2ag-2f-coop    & 0.37 ±0.10            & 0.38 ±0.10    & \textbf{0.64 ±0.08}  & \textbf{0.79 ±0.31}   & 0.01 ±0.02            & 0.01 ±0.02  \\
    RWARE-(10x20)-4ag        & 13.75 ±1.26           & 9.53 ±0.83    & \textbf{23.96 ±1.92} & 0.00 ±0.00            & 0.00 ±0.00            & 0.00 ±0.00  \\
    RWARE-(10x11)-4ag        & \textbf{40.10 ±5.60}  & 36.79 ±2.36   & \textbf{45.11 ±2.90} & 0.00 ±0.00            & 0.00 ±0.00            & 0.01 ±0.01  \\
    \bottomrule
    \end{tabular}
    }
\end{table}

\subsection{Analysis}

\begin{figure}[t]
    \centering
    \begin{minipage}{.5\textwidth}
        \centering
        \captionsetup{width=.9\linewidth}
        \includegraphics[width=\linewidth]{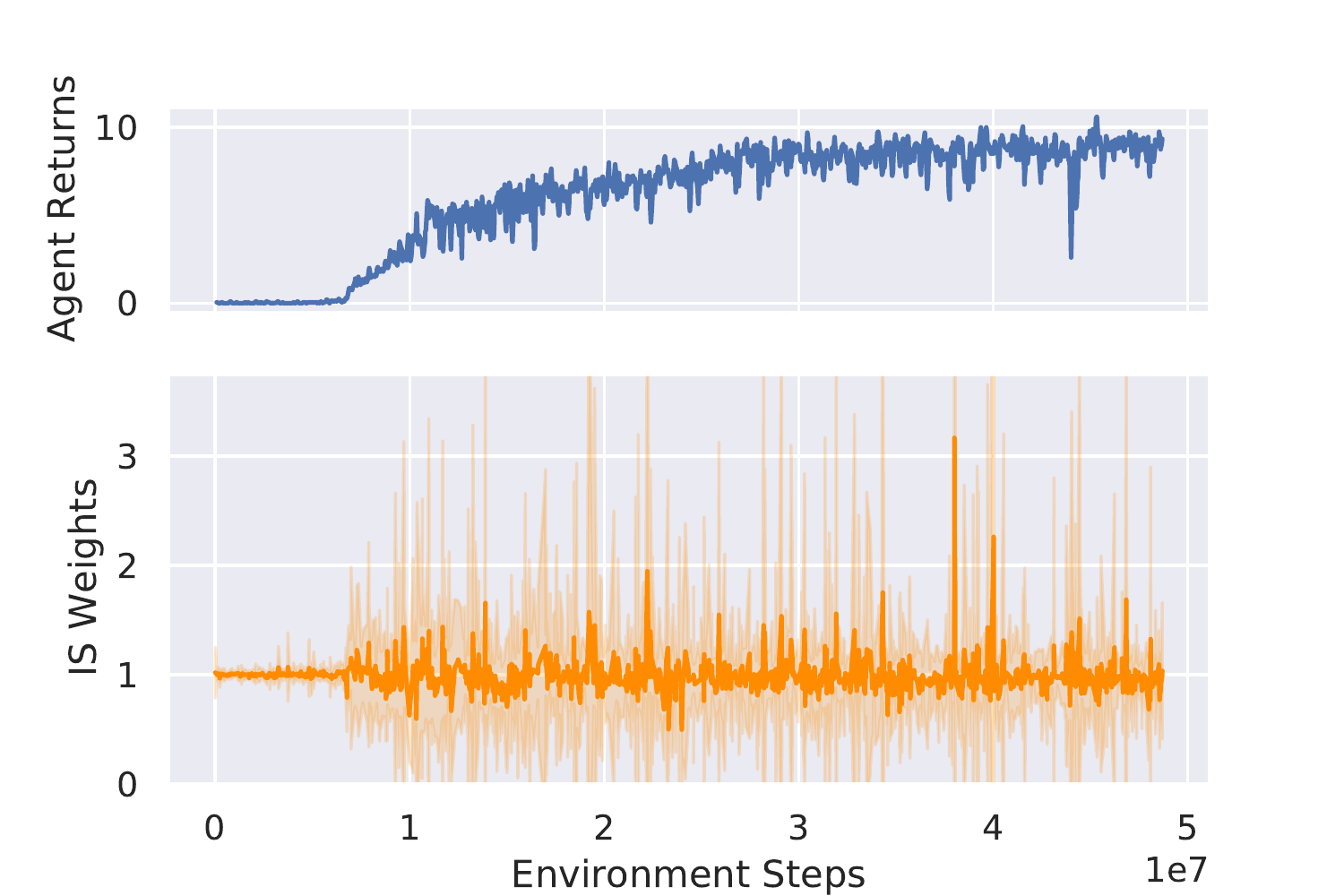}
        \captionof{figure}{Importance weights of one SEAC agent in RWARE, ($10\times11$), two agents, hard}
        \label{fig:importance-weights}
    \end{minipage}%
    \begin{minipage}{.5\textwidth}
        \centering
        \captionsetup{width=.9\linewidth}
        \includegraphics[width=\linewidth]{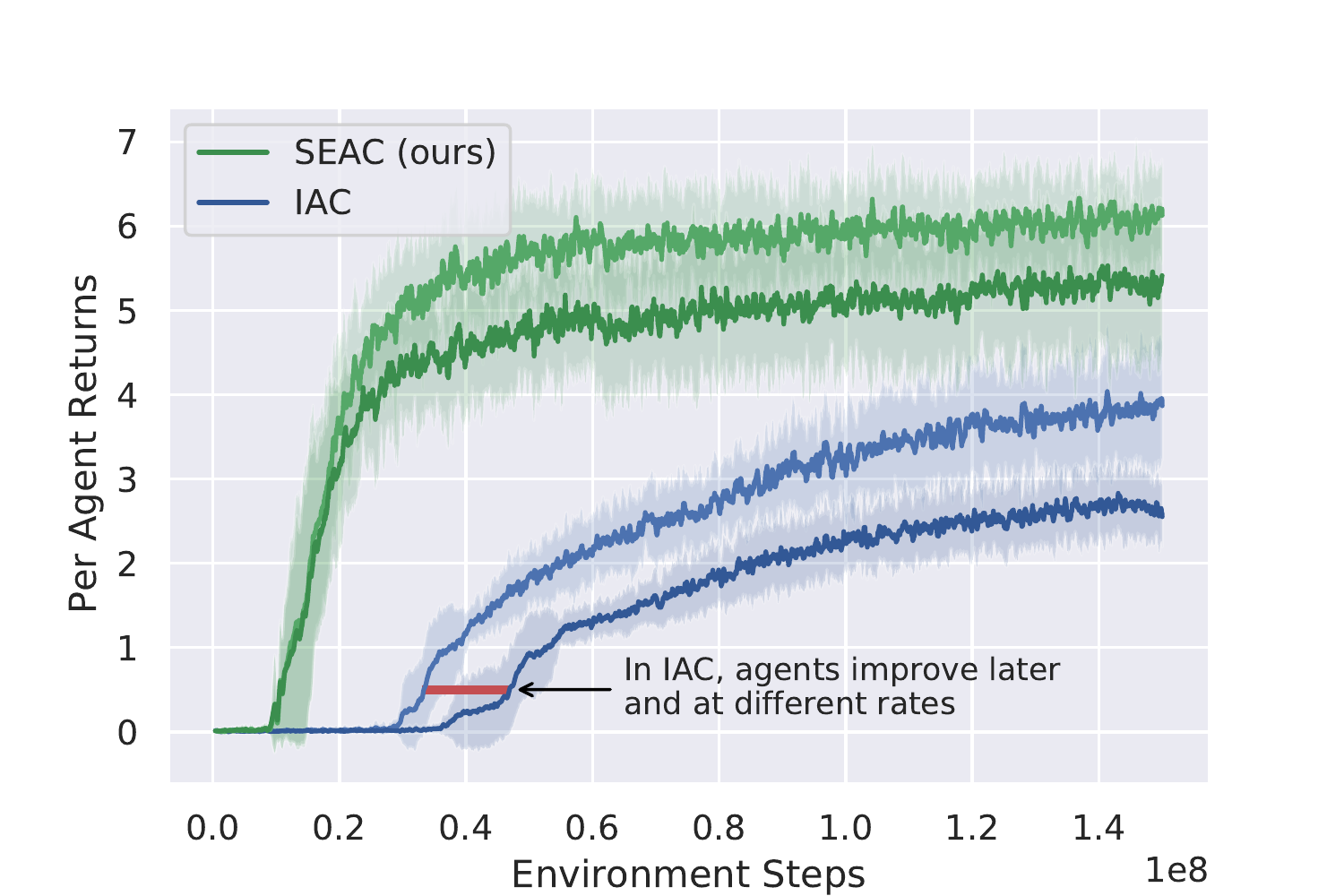}
        \captionof{figure}{Best vs.\ Worst performing agents on RWARE, ($10\times20$), four agents}
        \label{fig:best-vs-worst}
    \end{minipage}%
\end{figure}

Similar patterns can be seen for the different algorithms across all tested environments. It is not surprising that IAC requires considerably more environment samples to converge, given that the algorithm is less efficient in using them; IAC agents only train on their own experience. This is further evident when noticing that in RWARE (\crefrange{fig:results:tiny4}{fig:results:small}) the learning curve of SEAC starts moving upwards in roughly $\sfrac{1}{N}$ the timesteps compared to IAC, where $N$ refers to the number of agents. Also, it is not surprising that SNAC does not achieve as high returns after convergence: sharing a single policy across all agents impedes their ability to coordinate or develop distinct behaviours that lead to higher returns.

We conjecture that SEAC converges to higher final returns due to agents improving at similar rates when sharing experiences, combined with the flexibility to develop differences in policies to improve coordination. We observe that SEAC is able to learn similarly quickly to SNAC because the combined local gradients provide a very strong learning direction. However, while SNAC levels off at some point due to the use of identical policies, which limit the agents' ability to coordinate, SEAC can continue to explore and improve because agents are able to develop different policies to further improve coordination. \Cref{fig:importance-weights} shows that encountered importance weights during SEAC optimisation are centred around 1, with most weights staying in the range $[0.5; 1.5]$. This indicates that the agents indeed learn similar but not identical policies. The divergence of the policies is attributed to the random initialisation of networks, along with the agent-centred entropy factor (\Cref{sec:experiments:algorithm}). The range of the importance weights also shows that, in our case, importance sampling does not introduce significant instability in the training. The latter is essential for learning since importance weighting for off-policy RL is known to suffer from significant instability and high variance through diverging policies~\cite{Sutton1998ReinforcementIntroduction,precup2000eligibility}.

In contrast, we observe that IAC starts to improve at a much later stage than SEAC because agents need to explore for longer, and when they start improving it is often the case that one agent improves first while the other agents catch up later, which can severely impede learning. \Cref{fig:best-vs-worst} shows that agents using IAC end up learning at different rates, and the slowest one ends up with the lowest final returns. In learning tasks that require coordination, an agent being ahead of others in its training can impede overall training performance.

We find examples of agents learning at different rates being harmful to overall training in all our tested environments. In RWARE, an agent that learns to fulfil requests can make the learning more difficult for others by delivering all requests on its own. Agents with slightly less successful exploration have a harder time learning a rewarding policy when the task they need to perform is constantly done by others. In LBF, agents can choose to cooperate to gather highly rewarding food or focus on food that can be foraged independently. The latter is happening increasingly often when an agent is ahead in the learning curve as others are still aimlessly wandering in the environment. In the PP task, the predators must approach the prey simultaneously, but this cannot be the case when one predator does not know how to do so. In the SMAC-3m task, a single agent cannot be successful if its team members do not contribute to the fight. The agent would incorrectly learn that fighting is not viable and therefore prefer to run from the enemy, which however is not an optimal strategy.


\section{Conclusion}

This paper introduced SEAC, a novel multi-agent actor-critic algorithm in which agents learn from the experience of others. In our experiments, SEAC outperformed independent learning, shared policy training, and state-of-the-art MARL algorithms in ten sparse-reward learning tasks, across four environments, demonstrating
improved sample efficiency and final returns. We discussed a theme commonly found in MARL environments: agents learning at different rates impedes exploration, leading to sub-optimal policies. 
SEAC overcomes this issue by combining the local gradients and concurrently learning similar policies for all agents, but it also benefits from not being restricted to identical policies, allowing for better coordination and exploration.

Sharing experience is appealing especially due to its simplicity. We showed that barely any additional computational power, nor any extra parameter tuning are required and no additional networks are introduced. Therefore, its use should be considered in all environments that fit the requirements.

Future work could aim to relax the assumptions made for tasks SEAC can be applied to 
and evaluate in further multi-agent environments. Also, our work focused on the application of experience sharing to independent actor-critic. Further analysis of sharing experience as a generally applicable concept for MARL and its impact on a variety of MARL algorithms is left for future work.


\section*{Broader Impact}

Multi-agent deep reinforcement learning has potential applications in areas such as autonomous vehicles, robotic warehouses, internet of things, smart grids, and more. Our research could be used to improve reinforcement learning models in such applications. However, it must be noted that real-world application of MARL algorithms is currently not viable due to open problems in AI explainability, robustness to failure cases, legal and ethical aspects, and other issues, which are outside the scope of our work. 

That being said, improvements in MARL could lead to undue trust in RL models; having models that work well does not translate to models that are safe or which can be broadly used. Agents trained with these methods need to be thoroughly studied before being used in production. However, if these technologies are indeed used responsibly, they can improve several aspects of modern society such as making transportation safer, or performing jobs that might pose risks to humans.

\section*{Funding Disclosure}
This research was in part financially supported by the UK EPSRC Centre for Doctoral Training in Robotics and Autonomous Systems (F.C.), the Edinburgh University Principal’s Career Development Scholarship (L.S.), and personal grants from the Royal Society and the Alan Turing Institute (S.A.).

\AtNextBibliography{\small} 
\printbibliography

\clearpage
\appendix

\section{Environments}\label{apdx:envs}
\subsection{Multi-Robot Warehouse}\label{sec:warehouse}
The multi-robot warehouse environment (\Cref{fig:rware}) simulates a warehouse with robots moving and delivering requested goods. In real-world applications~\cite{Wurman2008}, robots pick-up shelves and deliver them to a workstation. Humans assess the content of a shelf, and then robots can return them to empty shelf locations.
In this simulation of the environment, agents control robots and the action space for each agent is
\begin{equation*}
    A=\{\text{Turn Left, Turn Right, Forward, Load/Unload Shelf}\}
\end{equation*}
Agents can move beneath shelves when they do not carry anything, but when carrying a shelf, agents must use the corridors visible in \Cref{fig:rware}.

\begin{figure}[t]
    \subfigure[t]{0.25\textwidth}
        \includegraphics[width=\linewidth]{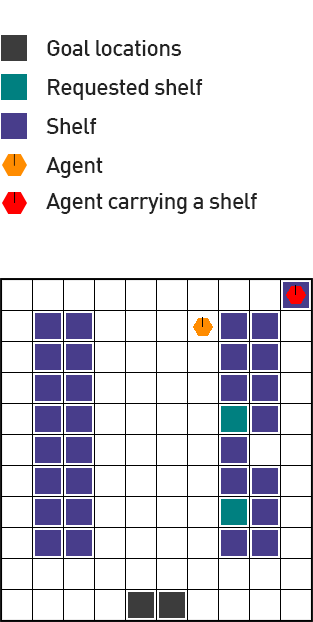}
        \caption{Tiny size, two agents}\label{fig:rware:tiny}
    \endsubfigure\hfill
    \subfigure[t]{0.25\textwidth}
        \includegraphics[width=\linewidth]{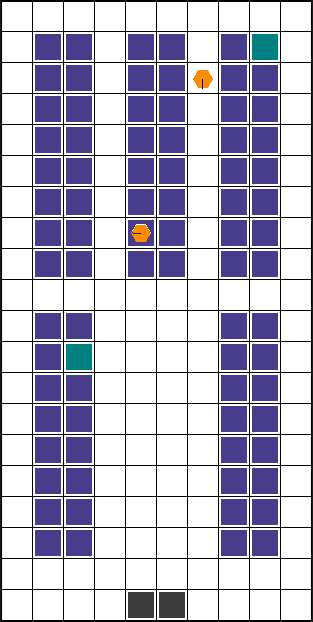}
        \caption{Small size, two agents}\label{fig:rware:small}
    \endsubfigure\hfill
    \subfigure[t]{0.40\textwidth}%
        \includegraphics[width=\linewidth]{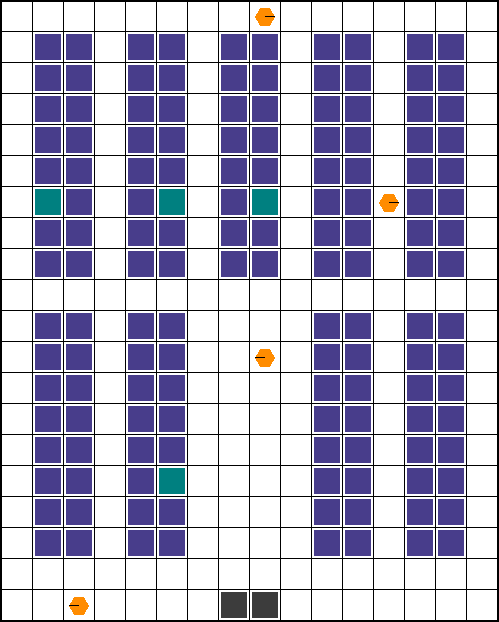}
        \caption{Medium size, four agents}\label{fig:rware:medium}
    \endsubfigure
    \caption{Three size variations of the multi-robot warehouse environment.}\label{fig:rware}
\end{figure}

The observation of an agent consists of a $3\times3$ square centred on the agent. It contains information about the surrounding agents (location/rotation) and shelves. 

At each time a fixed number of shelves $R$ is requested. When a requested shelf is brought to a goal location (dark squares in \cref{fig:rware}), another currently not requested shelf is uniformly sampled and added to the current requests. Agents are rewarded for successfully delivering a requested shelf to a goal location, with a reward of $1$. A major challenge in this environments is for agents to deliver requested shelves but also afterwards finding an empty shelf location to return the previously delivered shelf. Agents need to put down their previously delivered shelf to be able to pick up a new shelf. This leads to a very sparse reward signal.

Since this is a collaborative task, as a performance metric we use the sum of the undiscounted returns of all the agents.

The multi-robot warehouse task is parameterised by:
\begin{itemize}
    \item The size of the warehouse which is preset to either tiny ($10 \times 11$), small ($10 \times 20$), medium ($16 \times 20$), or large ($16 \times 29$). 
    \item The number of agents $N$. 
    \item The number of requested shelves $R$. By default $R=N$, but easy and hard variations of the environment use $R=2N$ and $R=\sfrac{N}{2}$, respectively.
\end{itemize}

Note that $R$ directly affects the difficulty of the environment. A small $R$, especially on a larger grid, dramatically affects the sparsity of the reward and thus exploration: randomly bringing the correct shelf becomes increasingly improbable.

\subsection{Level-Based Foraging}\label{sec:foraging}
The level-based foraging environment (\Cref{fig:lbforaging}) represents a mixed cooperative-competitive game~\cite{Albrecht2013ASystems}, which focuses on the coordination of the agents involved. Agents navigate a grid world and collect food by cooperating with other agents if needed.

\begin{figure}
    \subfigure[t]{0.24\textwidth}
        \includegraphics[width=\linewidth]{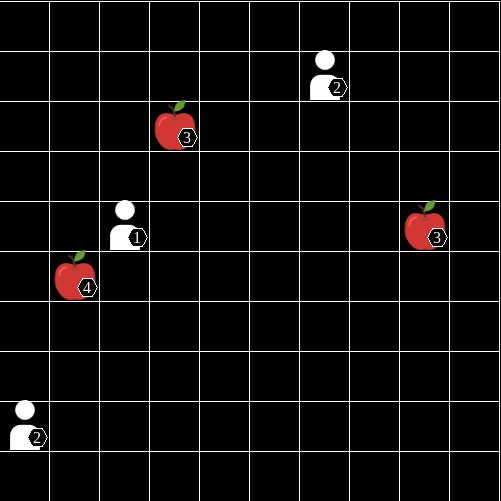}
        \caption{Foraging-10x10-3p-3f}
    \endsubfigure\hfill
    \subfigure[t]{0.24\textwidth}
        \includegraphics[width=\linewidth]{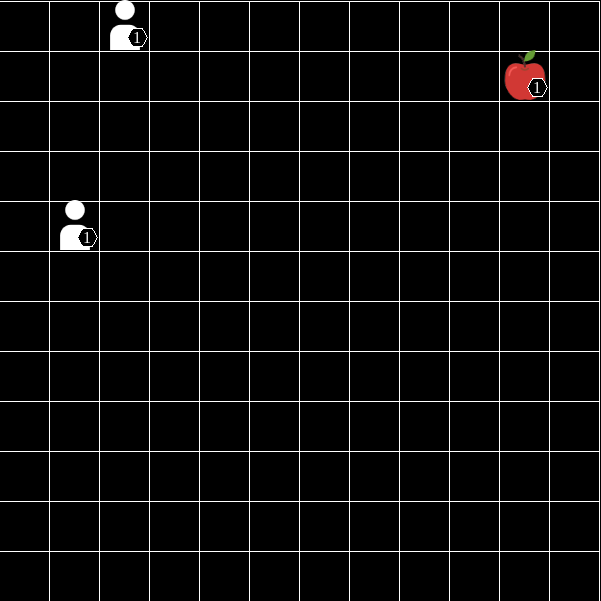}
        \caption{Foraging-12x12-2p-1f}
    \endsubfigure\hfill
    \subfigure[t]{0.24\textwidth}
        \includegraphics[width=\linewidth]{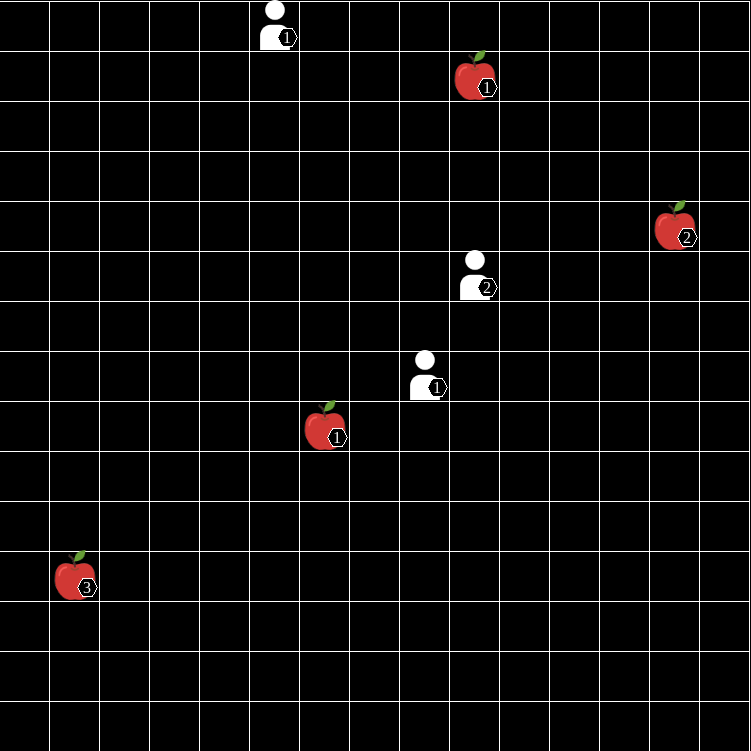}
        \caption{Foraging-15x15-3p-4f}
    \endsubfigure\hfill
    \subfigure[t]{0.24\textwidth}%
        \includegraphics[width=\linewidth]{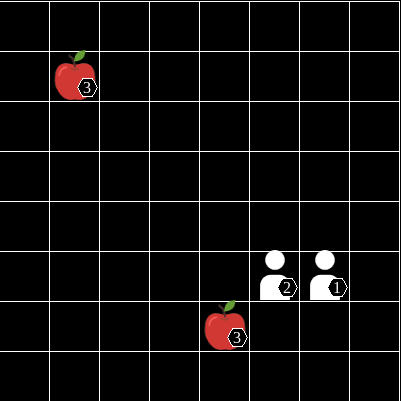}
        \caption{Foraging-8x8-2p-2f-coop}\label{fig:lbforaging:coop}
    \endsubfigure
    \caption{Four variations of level based foraging used in this work.}\label{fig:lbforaging}
\end{figure}

More specifically, agents and food are randomly scattered in the grid world, and each is assigned a level. 
Agents can navigate in the environment and attempt to collect food placed next to them. The collection of food is successful only if the sum of the levels of all agents involved in collecting at the same time is equal to or higher than the level of the food. Agents are rewarded proportional to the level of food they took part in collecting. Episodes are terminated once all food has been collected or the maximum episode length of $25$ timesteps is reached.

We are using full observability for this environment, meaning agents observe the locations and levels of all entities in the map. Each agent can attempt to move in all four directions and attempt to load adjacent food, for a total of five actions. After successfully loading a food, agents are rewarded:
\begin{equation*}
    r^i = \frac{FoodLevel * AgentLevel}{\sum{FoodLevels}\sum{LoadingAgentsLevel}}
\end{equation*}
This normalisation ensures that the sum of the agent returns on a solved episode equals to one.

Note that the final variant, \Cref{fig:lbforaging:coop}, is a fully-cooperative environment. Food levels are always equal to the sum of all agents' levels, requiring all agents to load simultaneously, and thus sharing the reward. 

\section{Additional Experimental Details}\label{sec:additional_details}
\begin{wraptable}{r}{0.5\textwidth}
    \caption{Hyperparameters used for implementation of SEAC, IAC and SNAC}\label{tab:hyperparams}
    \centering
    \begin{tabular}{@{}lr@{}}
    \toprule
    Hyperparameter     & Value\\ \midrule
    learning rate      & $3e^{-4}$                 \\
    network size       & $64\times64$              \\
    adam epsilon       & 0.001                     \\
    gamma              & 0.99                      \\
    entropy coef       & 0.01                      \\
    value loss coef    & 0.5                       \\
    GAE                & False \\
    grad clip          & 0.5                       \\
    parallel processes & 4                         \\
    n-steps            & 5                         \\
    $\lambda$ (\Cref{eq:policyloss_shared,eq:valueloss_shared})   & 1.0                         \\ \bottomrule
    \end{tabular}
\end{wraptable}

Our implementations of IAC, SEAC, and SNAC closely follow A2C~\cite{Mnih2016AsynchronousLearning}, using n-step returns and parallel sampled environments. \Cref{tab:hyperparams} contains the hyperparameters used in the experiments. Hyperparameters for MADDPG, QMIX and ROMA were optimised using a grid search over learning rate, exploration rate and batch sizes with the grid centred on the hyperparameters used in the original papers and parameter performance tested in all used environments.

\Cref{tab:comp_time} contains process time required for running IAC and SEAC. Timings were measured on a $6^{th}$ Gen Intel i7 @ 4.6 Ghz running Python 3.7 and PyTorch 1.4. The average time for running and training on 100,000 environment iterations is displayed. Only process time (the time the program was active in the CPU) was measured, rounded to seconds. Often, the bottleneck is the environment and not the network update and as such, more complex and slower simulators, such as SMAC, show a lower percentage difference between algorithms.

\begin{table}[t]\centering
    \caption{Measured mean process time (mins:secs) required for \num{100000} timesteps.  }\label{tab:comp_time}
    \begin{tabular}{@{}lrrr@{}}
    \toprule
                               & \multicolumn{1}{l}{IAC} & \multicolumn{1}{l}{SEAC} & \multicolumn{1}{l}{\% increase} \\ \midrule
    Foraging-10x10-3p-3f-v0    & 2:00                    & 2:04                     & 3.86\%                        \\
    Foraging-12x12-2p-1f-v0    & 1:22                    & 1:24                     & 2.94\%                        \\
    Foraging-15x15-3p-4f-v0    & 2:01                    & 2:06                     & 3.90\%                        \\
    Foraging-8x8-2p-2f-coop-v0 & 1:21                    & 1:24                     & 3.78\%                        \\
    rware-tiny-2ag-v1          & 1:41                    & 1:43                     & 1.65\%                        \\
    rware-tiny-2ag-hard-v1     & 2:05                    & 2:09                     & 2.97\%                        \\
    rware-tiny-4ag-v1          & 2:49                    & 2:53                     & 2.25\%                        \\
    rware-small-4ag-v1         & 2:50                    & 2:55                     & 2.44\%                        \\
    Predator Prey              & 2:44                    & 2:49                     & 3.39\%                        \\
    SMAC (3m)                  & 6:23                    & 6:25                     & 0.38\%                        \\ \bottomrule
    \end{tabular}
\end{table}

\begin{figure}[t]
    \centering
    \subfigure[t]{0.5\linewidth}
        \includegraphics[width=\linewidth]{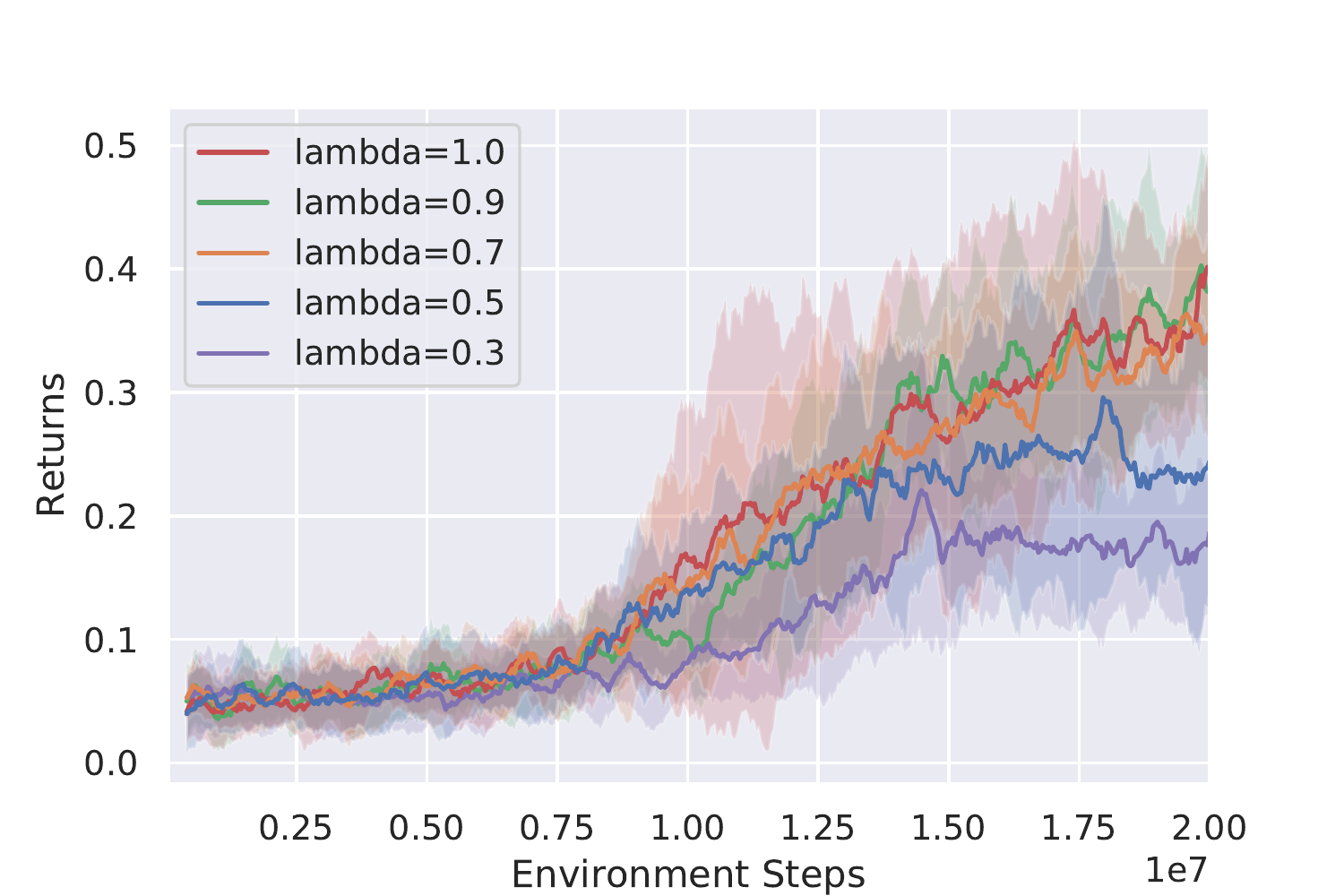}   
        \caption{LBF ($15 \times 15$), 3 agents, 4 foods}
    \endsubfigure\hfill
     \subfigure[t]{0.5\linewidth}
        \includegraphics[width=\linewidth]{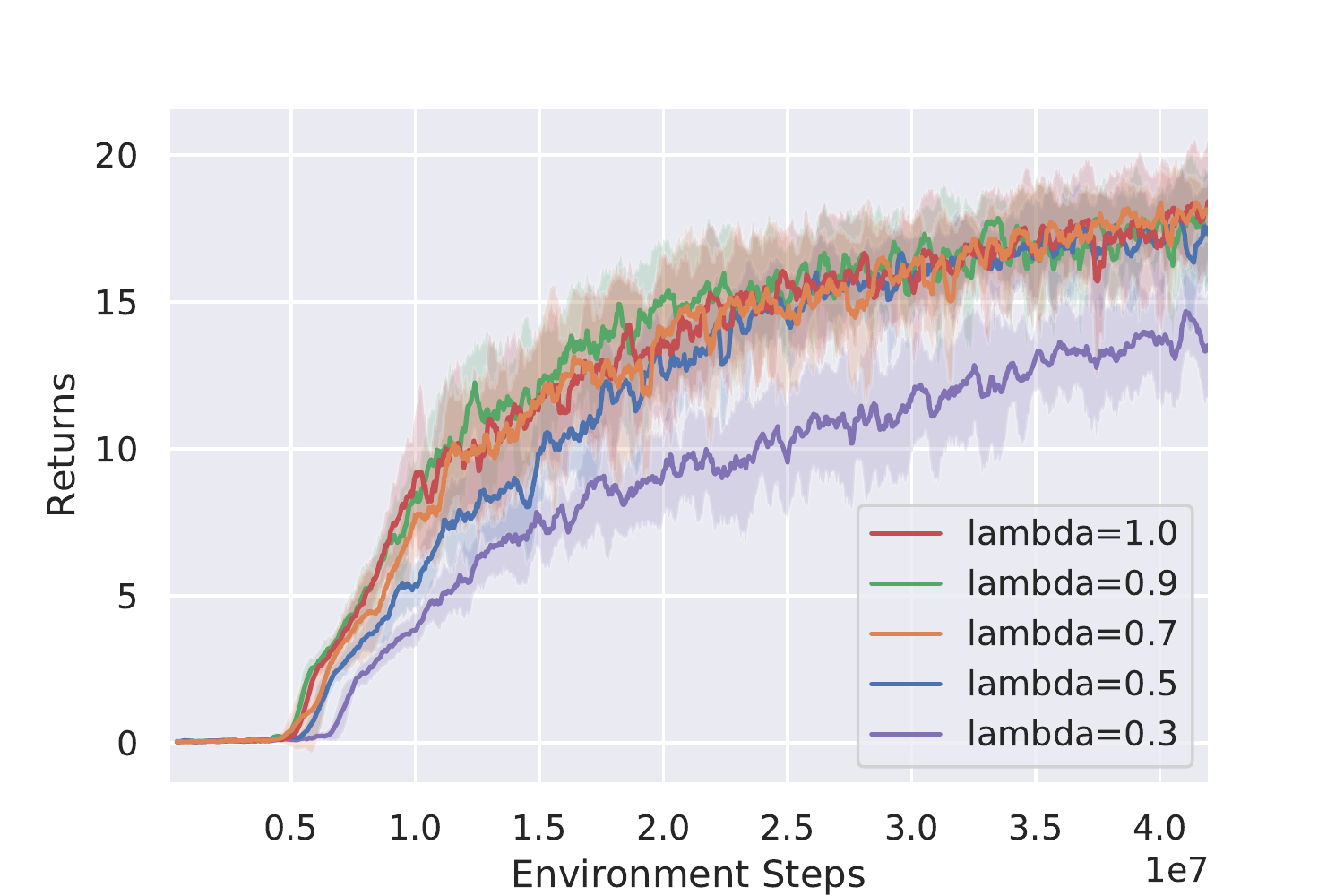}
        \caption{RWARE ($10 \times 20$), two agents}
    \endsubfigure\hfill
    \caption{Training returns with different values of $\lambda$ in SEAC}\label{fig:appendix:lambdas}
\end{figure}

\Cref{fig:appendix:lambdas} shows the training returns with respect to different $\lambda$ values being applied in SEAC. We find that SEAC is not sensitive to tuning of the hyperparameter $\lambda$ with similar performance across a wide range of values. Much lower values for $\lambda$ closer to $0$ lead to decreased performance, eventually converging to IAC for $\lambda = 0$.

For calculation of evaluation returns (\cref{tab:sota}), the best saved models per seed were selected and evaluated for 100 episodes. During evaluation, QMIX and ROMA use $\epsilon=0$, while MADDPG and AC algorithms apply stochastic policies.

\section{SEAC Loss Derivation}\label{appendix:loss_derivation}
We provide the following derivation of SEAC policy loss, as shown in \Cref{eq:policyloss_shared}, for a fully observable two-agent Markov game
$$ \mathcal{M} = \left( \cN = \{1, 2\}, \cS, (A^1, A^2), \cP, (\cR^1, \cR^2)\right) $$
As per \Cref{sec:background}, let $\cA = A^1 \times A^2$ be the joint action space and $A=A^1=A^2$.

In the following, we use $\pi_1$ and $\pi_2$ to denote the policy of agent $1$ and agent $2$ which are conditioned on parameters $\phi_1$ and $\phi_2$, respectively. We use $V^1$ and $V^2$ to denote the state value function of agents $1$ and $2$ which are conditioned on parameters $\theta_1$ and $\theta_2$.



In order to account for different action distributions under policies $\pi_1$ and $\pi_2$, we use importance sampling (IS) defined for any function $g$ over actions
\begin{equation*}
    \E_{a\sim \pi_1(a|s)}\left[ g(a) \right] = \E_{a\sim \pi_2(a|s')}\left[\frac{\pi_1(a|s)}{\pi_2(a|s')} g(a) \right]
\end{equation*}

which can be derived as follows
\begin{equation*}
    \E_{a\sim \pi_1(a|s)}\left[ g(a) \right] = \int_a \pi_1(a|s) g(a) da = \int_a \frac{\pi_2(a|s')}{\pi_2(a|s')} \pi_1(a|s) g(a) da = \E_{a\sim \pi_2(a|s')}\left[  \frac{\pi_1(a|s)}{\pi_2(a|s')} g(a) \right]
\end{equation*}

\begin{assumption}[Reward Independence Assumption: A1]
\label{assumption:reward_independence}
    We assume that an agent perceives the rewards as dependent only on its own action. Other agents are perceived as part of the environment.
    \begin{equation*}
        \begin{split}
            \forall s, s' \in \cS: \forall a \in A: \hat{\cR}^1(s, a, s') &=  \cR^1(s,(a,\cdot), s')\\
            \forall s, s' \in \cS: \forall a \in A: \hat{\cR}^2(s, a, s') &=  \cR^2(s,(\cdot,a), s')
        \end{split}
    \end{equation*}
\end{assumption}

\begin{assumption}[Symmetry Assumption: A2]\label{assumption:symmetry}
    We assume there exists a function $f: \cS \mapsto \cS$ such that
    \begin{equation*}
        \begin{split}
        &\forall s, s' \in \cS: \forall (a_1, a_2) \in \cA: \cR^1(f(s), (a_2, a_1), f(s')) = \cR^2(s, (a_1, a_2), s')\\
        \text{and }&\forall s, s' \in \cS: \forall (a_1, a_2) \in \cA: \cP(s, (a_1, a_2))(s') = \cP(f(s), (a_2, a_1))(f(s'))
        \end{split}
    \end{equation*}
\end{assumption}
Intuitively, given a state $s$, $f(s)$ swaps the agents: agent $1$ is in place of agent $2$ and vice versa.

\begin{lemma}[Reward Symmetry: L1]\label{lemma:reward_symmetry}
    From these two assumptions, it follows that for any states $s, s' \in \cS$, and any action $a \in A$ the following holds:
    \begin{equation*}
        \begin{split}
            \hat{\cR}^1(f(s), a, f(s')) = \hat{\cR}^2(s, a, s')\\
            \hat{\cR}^2(f(s), a, f(s')) = \hat{\cR}^1(s, a, s')
        \end{split}
    \end{equation*}
\end{lemma}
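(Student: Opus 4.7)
The plan is to prove both equations of L1 by a short chain of substitutions: apply Assumption A1 to strip the hat and expose the underlying joint-action reward, apply Assumption A2 to swap the two agents and move $f$ across, then apply A1 in reverse to reintroduce the hat on the other agent. No calculation is needed beyond bookkeeping of which slot in the joint action each symbol occupies.

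For the first equation, I would start from $\hat{\cR}^1(f(s), a, f(s'))$ and apply A1 to rewrite it as $\cR^1(f(s), (a, \cdot), f(s'))$, where the dot marks the slot on which the reward does not depend. I would then use the reward half of A2 with $a_2 = a$ and $a_1 = \cdot$ to obtain $\cR^2(s, (\cdot, a), s')$, and close by applying A1 once more to recover $\hat{\cR}^2(s, a, s')$.

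For the second equation, the same template works starting from $\hat{\cR}^2(f(s), a, f(s'))$, but it requires A2 in the opposite direction: we need $\cR^2(f(s), (\cdot, a), f(s')) = \cR^1(s, (a, \cdot), s')$. The cleanest way to get this is to substitute $s \mapsto f(s)$ and $s' \mapsto f(s')$ into the A2 identity and then use that $f$ is an involution (which follows from the stated intuition that $f$ swaps agents: swapping twice restores the original state, so $f \circ f = \mathrm{id}$). The chain then closes exactly as before with one further application of A1.

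The main obstacle is precisely this: A2 as written only states one directional identity relating $\cR^1$ to $\cR^2$, so the second equation of L1 is not immediate without the observation that the agent-swap interpretation forces $f$ to be an involution. If one is unwilling to read that into the assumption, one would either strengthen A2 to include its mirror version symmetrically, or add an explicit invertibility condition $f \circ f = \mathrm{id}$. Everything else in the proof is symbol pushing: A1 handles the hats, A2 handles the swap, and no properties of $\cP$ or of the value functions are needed.
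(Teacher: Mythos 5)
Your proof follows exactly the same A1--A2--A1 substitution chain as the paper's own proof, so in that sense the approach is identical. Your remark about the second equation is well taken and is actually a point the paper glosses over: the paper's proof simply writes $\cR^2(f(s),(\cdot,a),f(s')) \stackrel{A2}{=} \cR^1(s,(a,\cdot),s')$, which is Assumption A2 instantiated at the state pair $(f(s), f(s'))$ and therefore implicitly requires $f(f(s)) = s$ and $f(f(s')) = s'$; since the paper never states that $f$ is an involution (only the informal ``$f$ swaps the agents''), the gap you identify lies in the paper's argument rather than in yours, and your suggested fixes (assume $f \circ f = \mathrm{id}$, or state A2 symmetrically) are the right ways to close it.
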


\begin{proof}
    \begin{equation*}
        \begin{split}
            \hat{\cR}^1(f(s), a, f(s')) \stackrel{A1}{=} \cR^1(f(s), (a, \cdot), f(s')) \stackrel{A2}{=} \cR^2(s, (\cdot, a), s') \stackrel{A1}{=} \hat{\cR}^2(s, a, s')\\
            \hat{\cR}^2(f(s), a, f(s')) \stackrel{A1}{=} \cR^2(f(s), (\cdot, a), f(s')) \stackrel{A2}{=} \cR^1(s, (a, \cdot), s') \stackrel{A1}{=} \hat{\cR}^1(s, a, s')
        \end{split}
    \end{equation*}
\end{proof}

During exploration, agent $1$ and $2$ follow policy $\pi_1$ and $\pi_2$, respectively. We will derive \Cref{eq:policyloss_shared,eq:valueloss_shared} for training $\pi_1$ and $V^1$ using experience collected from agent $2$. The derivation for optimisation of $\pi_2$ and $V^2$ using experience of agent $1$ can be done analogously by substituting agent indices. Note that we only derive the off-policy terms of the SEAC policy and value loss. The on-policy terms of given losses are identical to A2C~\cite{Mnih2016AsynchronousLearning}.

Agent $2$ executes action $a_2$ in state $s$. Following Assumption~\ref{assumption:symmetry}, agent $1$ needs to reinforce $\pi_1(a_2, f(s))$. Notably, in state $f(s)$, $a_1$ is sampled by $\pi_2$, so importance sampling is used to correct for this behavioural policy. 

\begin{proposition}[Actor Loss Gradient]
    \begin{equation*}
        \nabla_{\phi_1} \mathcal{L}(\phi_1) = \E_{a_2 \sim \pi_2} \left[ \frac{\pi_1(a_2 | f(s))}{\pi_2(a_2 | s)} \left( \cR^2(s, (\cdot, a_2), s') + \gamma V^1(f(s')) \right) \nabla_{\phi_1} \log \pi_1(a_2 | f(s))\right]
    \end{equation*}
\end{proposition}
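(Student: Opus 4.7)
The plan is to start from what agent $1$'s standard on-policy actor gradient would look like if it were actually acting in the symmetric (mirrored) state $f(s)$, and then to convert that quantity into an expectation over data that agent $2$ has actually generated, using the importance-sampling identity and the reward-symmetry lemma already established.

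More concretely, I would first write down the on-policy actor gradient that would arise if agent $1$ executed an action in state $f(s)$ and then bootstrapped with $V^1$. Dropping the state-value baseline (since the loss in \Cref{eq:policyloss_shared} uses the raw TD target $r + \gamma V$), the relevant on-policy gradient is
\begin{equation*}
\nabla_{\phi_1}\mathcal{L}(\phi_1) = -\E_{a\sim\pi_1(\cdot|f(s)),\,\tilde s'\sim\cP(f(s),(a,\cdot))}\!\left[\bigl(\hat{\cR}^1(f(s),a,\tilde s') + \gamma V^1(\tilde s')\bigr)\,\nabla_{\phi_1}\log\pi_1(a\mid f(s))\right].
\end{equation*}
Next I would apply the importance-sampling identity stated in the excerpt to swap the sampling distribution from $\pi_1(\cdot\mid f(s))$ to $\pi_2(\cdot\mid s)$, which introduces the ratio $\pi_1(a_2\mid f(s))/\pi_2(a_2\mid s)$ and renames the dummy action to $a_2$.

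The third step is to rewrite the reward and the successor state in terms of quantities that agent $2$ actually observes. By \Cref{lemma:reward_symmetry}, $\hat{\cR}^1(f(s),a_2,f(s')) = \hat{\cR}^2(s,a_2,s')$, and by \Cref{assumption:reward_independence} this equals $\cR^2(s,(\cdot,a_2),s')$; so if I can argue that the distribution of $\tilde s'$ under $\cP(f(s),(a_2,\cdot))$ is, after identifying $\tilde s' = f(s')$, exactly the distribution of $s'$ under $\cP(s,(\cdot,a_2))$ that agent $2$ actually experiences, then the reward and the bootstrap term $V^1(\tilde s') = V^1(f(s'))$ line up precisely with what the proposition claims. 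This distributional identification is given by \Cref{assumption:symmetry}, which says $\cP(s,(a_1,a_2))(s') = \cP(f(s),(a_2,a_1))(f(s'))$, so the pushforward under $f$ of agent $2$'s transition distribution matches the hypothetical transition distribution agent $1$ would see in the mirrored state.

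The main obstacle I expect is not algebraic but bookkeeping: I have to be careful that the expectation over $\tilde s'$ in agent $1$'s hypothetical on-policy gradient is genuinely replaceable, via the pushforward under $f$, by the expectation over $s'$ in agent $2$'s actually observed trajectory, and that after this change of variables the integrand $(\hat{\cR}^1(f(s),a_2,f(s')) + \gamma V^1(f(s')))\,\nabla_{\phi_1}\log\pi_1(a_2\mid f(s))$ depends on $s'$ only through $f(s')$ — which it does, so no Jacobian issues arise. Once this is verified, collecting the importance weight, applying \Cref{lemma:reward_symmetry}, and dropping the overall minus sign (the proposition states the gradient of the objective, not of the loss) yields the claimed expression, and the analogous derivation for agent $2$ follows by swapping indices.
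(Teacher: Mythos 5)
Your proposal is correct and follows essentially the same route as the paper: start from agent $1$'s hypothetical on-policy gradient in the mirrored state $f(s)$, apply the importance-sampling identity to move to $a_2 \sim \pi_2$, then use Assumption~\ref{assumption:reward_independence} and Lemma~\ref{lemma:reward_symmetry} to rewrite the TD target in terms of agent $2$'s observed reward and $V^1(f(s'))$. The only difference is that you make explicit the pushforward-under-$f$ identification of the successor-state distributions via the transition part of Assumption~\ref{assumption:symmetry}, a step the paper's proof leaves implicit.
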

\begin{proof}
    \begin{align*}
        \begin{split}
            &\nabla_{\phi_1} \mathcal{L}(\phi_1) = \E_{\substack{a_1 \sim\pi_2\\a_2 \sim \pi_1}} \left[ Q^1(f(s), a_2) \nabla_{\phi_1} \log \pi_1(a_2 | f(s))\right]\\
            &\stackrel{IS}{=} \E_{a_1, a_2 \sim \pi_2} \left[ \frac{\pi_1(a_2 | f(s))}{\pi_2(a_2 | s))} Q^1(f(s), a_2) \nabla_{\phi_1} \log \pi_1(a_2 | f(s))\right]\\
            &= \E_{a_1, a_2 \sim \pi_2} \left[ \frac{\pi_1(a_2 | f(s))}{\pi_2(a_2 | s)} \left( \cR^1(f(s), (a_2, a_1), f(s')) + \gamma V^1(f(s')) \right) \nabla_{\phi_1} \log \pi_1(a_2 | f(s))\right]\\
            &\stackrel{A1}{=} \E_{a_2 \sim \pi_2} \left[ \frac{\pi_1(a_2 | f(s))}{\pi_2(a_2 | s)} \left( \hat{\cR}^1(f(s), a_2, f(s')) + \gamma V^1(f(s')) \right) \nabla_{\phi_1} \log \pi_1(a_2 | f(s))\right]\\
            &\stackrel{L1}{=} \E_{a_2 \sim \pi_2} \left[ \frac{\pi_1(a_2 | f(s))}{\pi_2(a_2 | s)} \left( \hat{\cR}^2(s, a_2, s') + \gamma V^1(f(s')) \right) \nabla_{\phi_1} \log \pi_1(a_2 | f(s))\right]\\
            &\stackrel{A1}{=} \E_{a_2 \sim \pi_2} \left[ \frac{\pi_1(a_2 | f(s))}{\pi_2(a_2 | s)} \left( \cR^2(s, (\cdot, a_2), s') + \gamma V^1(f(s')) \right) \nabla_{\phi_1} \log \pi_1(a_2 | f(s))\right]
        \end{split}
    \end{align*}
\end{proof}

It should be noted that no gradient is back-propagated through the target $V^1(f(s'))$). In the same manner, the value loss, as shown in \Cref{eq:valueloss_shared}, can be derived as follows.

\begin{proposition}[Value Loss]
    \begin{equation*}
            \mathcal{L}(\theta_1) = \E_{a_2 \sim \pi_2} \left[\frac{\pi_1(a_2 | f(s))}{\pi_2(a_2 | s)} ||V^1(f(s)) - \left(\cR^2(s, (\cdot, a_2), s') + \gamma V^1(f(s'))\right)||^2 \right]
    \end{equation*}
\end{proposition}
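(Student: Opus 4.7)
The plan is to mirror the preceding Actor Loss Gradient derivation almost verbatim, substituting the squared TD error for the score-weighted advantage as the integrand. Concretely, I would: (i) start from the single-agent form of the value loss for agent $1$ (as in \cref{eq:valueloss}) evaluated at the symmetric state $f(s)$ with a notional action drawn from $\pi_1(\cdot|f(s))$; (ii) change the expectation from $\pi_1(\cdot|f(s))$ to the behavioural policy $\pi_2(\cdot|s)$ via the importance sampling identity stated in the preamble; (iii) rewrite the agent-$1$ reward term using Lemma L1 and then Assumption A1 to land on an agent-$2$ reward term, after which relabelling the dummy action as $a_2$ yields the claim.

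\textbf{Execution.} Step (i) writes the target expression in the form
\begin{equation*}
\mathcal{L}(\theta_1) = \E_{a \sim \pi_1(\cdot|f(s))}\Bigl[\bigl\| V^1(f(s)) - \bigl(\hat{\cR}^1(f(s), a, f(s')) + \gamma V^1(f(s'))\bigr)\bigr\|^2\Bigr].
\end{equation*}
As in the Actor Loss Gradient proof, the expectation over the other agent's action is suppressed because, under A1, the reward seen by agent $1$ depends only on its own action; and reinterpreting the sample $(s, a_2, s')$ collected along agent $2$'s trajectory as the mirrored transition $(f(s), a_2, f(s'))$ for agent $1$ is exactly what A2 grants, since it equates $\cP(s,(a_1,a_2))(s')$ with $\cP(f(s),(a_2,a_1))(f(s'))$. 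Step (ii) then multiplies and divides by $\pi_2(a|s)$ inside the integrand, applying the importance sampling identity with $g(a) = \| V^1(f(s)) - (\hat{\cR}^1(f(s), a, f(s')) + \gamma V^1(f(s')))\|^2$, which gives
\begin{equation*}
\mathcal{L}(\theta_1) = \E_{a_2 \sim \pi_2(\cdot|s)}\left[\frac{\pi_1(a_2|f(s))}{\pi_2(a_2|s)}\bigl\| V^1(f(s)) - \bigl(\hat{\cR}^1(f(s), a_2, f(s')) + \gamma V^1(f(s'))\bigr)\bigr\|^2\right].
\end{equation*}
Step (iii) invokes Lemma L1 to substitute $\hat{\cR}^1(f(s), a_2, f(s')) = \hat{\cR}^2(s, a_2, s')$ and Assumption A1 to unfold $\hat{\cR}^2(s, a_2, s') = \cR^2(s, (\cdot, a_2), s')$, yielding the stated expression.

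\textbf{Main obstacle.} The calculation itself is mechanical; the only real content is justifying Step (i), namely that agent $2$'s on-policy sampled transition may be reinterpreted as an off-policy transition for agent $1$ at the mirrored state $f(s)$. This is precisely the combined role of A1 (rewards depend only on the agent's own action) and A2 (the transition kernel is invariant under the agent-swap map $f$), and it is the same conceptual subtlety already carried by the Actor Loss Gradient proof. Once this reinterpretation is granted, the remaining work is a single importance-sampling change of measure followed by syntactic rewriting via L1 and A1. I would also note, in parallel to the remark made for the actor loss, that no gradient is back-propagated through the bootstrap target $V^1(f(s'))$, so the squared error is treated as a semi-gradient objective in the usual sense.
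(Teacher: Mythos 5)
Your proposal is correct and follows essentially the same route as the paper's proof: importance sampling to shift the expectation from $\pi_1(\cdot\,|\,f(s))$ to $\pi_2(\cdot\,|\,s)$, then L1 and A1 to rewrite the reward term, with the A2 transition-symmetry remark and the semi-gradient caveat matching the paper's surrounding discussion. The only cosmetic difference is that you invoke A1 to collapse the joint reward before the importance-sampling step, whereas the paper writes the full joint expectation first and applies A1 afterwards; the two orderings are equivalent.
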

\begin{proof}
    \begin{align*}
        \begin{split}
            &\mathcal{L}(\theta_1) = \E_{\substack{a_1\sim\pi_2\\a_2 \sim \pi_1}} \left[ ||V^1(f(s)) - \left( \cR^1(f(s), (a_2, a_1), f(s')) + \gamma V^1(f(s'))\right)||^2 \right]\\
            &\stackrel{IS}{=} \E_{a_1, a_2 \sim \pi_2} \left[\frac{\pi_1(a_2 | f(s))}{\pi_2(a_2 | s)} ||V^1(f(s)) - \left(\cR^1(f(s), (a_2, a_1), f(s')) + \gamma V^1(f(s'))\right)||^2 \right]\\
            &\stackrel{A1}{=} \E_{a_2 \sim \pi_2} \left[\frac{\pi_1(a_2 | f(s))}{\pi_2(a_2 | s)} ||V^1(f(s)) - \left(\hat{\cR}^1(f(s), a_2, f(s')) + \gamma V^1(f(s'))\right)||^2 \right]\\
            &\stackrel{L1}{=} \E_{a_2 \sim \pi_2} \left[\frac{\pi_1(a_2 | f(s))}{\pi_2(a_2 | s)} ||V^1(f(s)) - \left(\hat{\cR}^2(s, a_2, s') + \gamma V^1(f(s'))\right)||^2 \right]\\
            &\stackrel{A1}{=} \E_{a_2 \sim \pi_2} \left[\frac{\pi_1(a_2 | f(s))}{\pi_2(a_2 | s)} ||V^1(f(s)) - \left(\cR^2(s, (\cdot, a_2), s') + \gamma V^1(f(s'))\right)||^2 \right]
        \end{split}
    \end{align*}
\end{proof}

\section{Shared Experience Q-Learning}\label{sec:appendix_seql}
\begin{figure}[b]
    \centering
     \subfigure[t]{0.5\linewidth}
        \includegraphics[width=\linewidth]{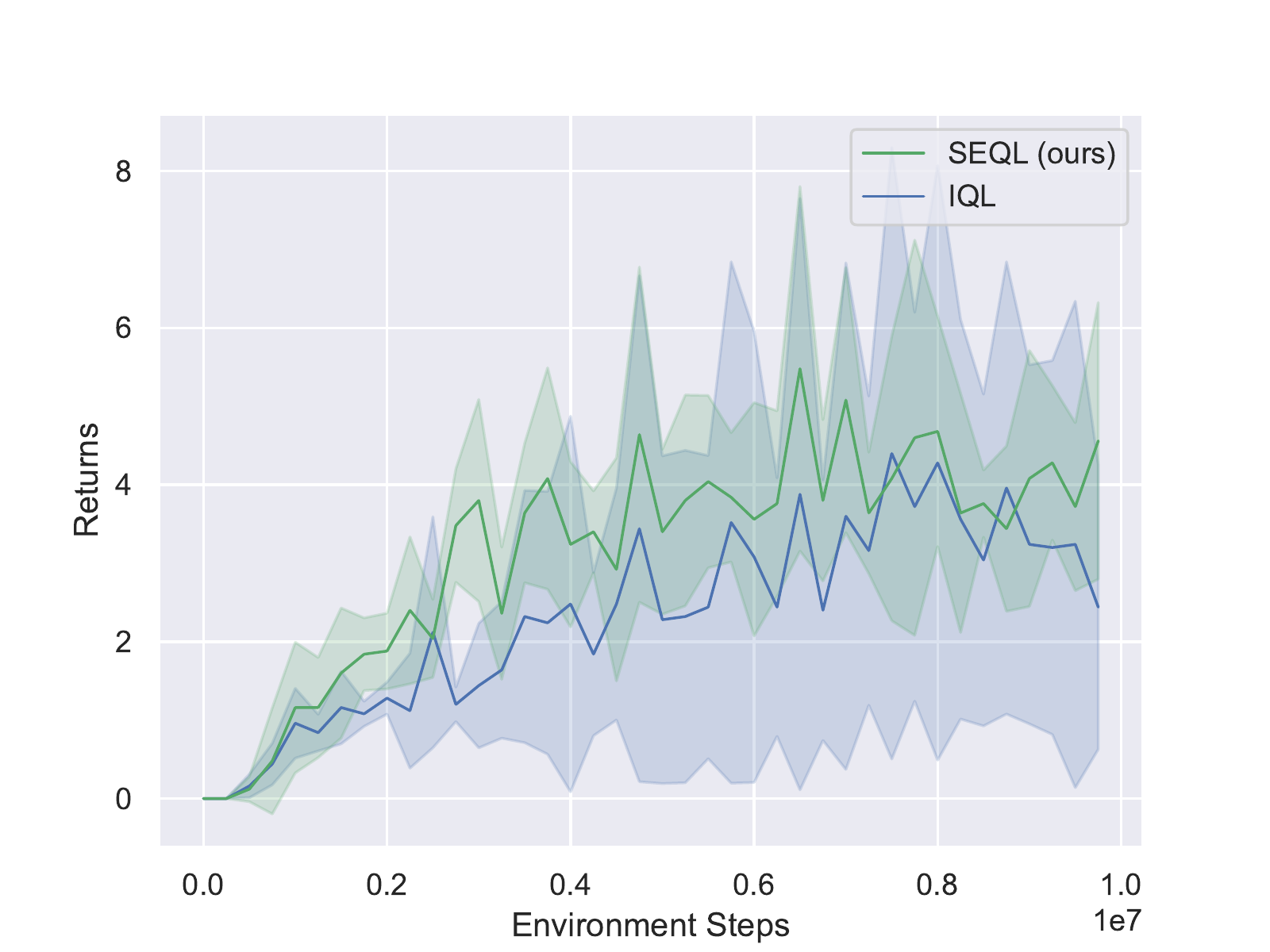}
        \caption{RWARE ($10 \times 20$), two agents}\label{fig:appendix:seql_rware-small-2ag}
    \endsubfigure\hfill
    \subfigure[t]{0.5\linewidth}
        \includegraphics[width=\linewidth]{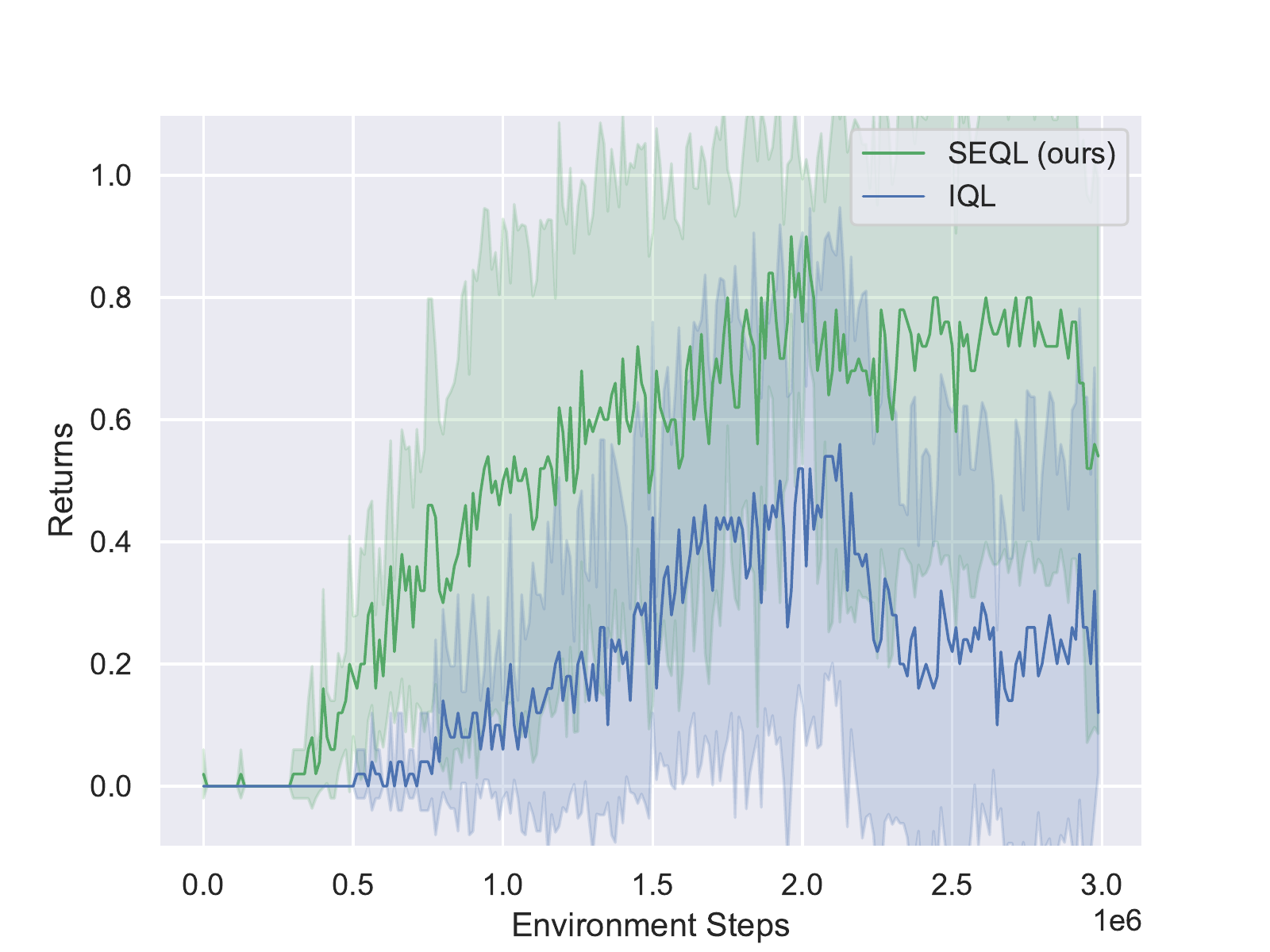}
        \caption{LBF: ($8 \times 8$), two agents, two fruits, cooperative}\label{fig:appendix:seql_lbf-8x8-2p-2f-coop}
    \endsubfigure
    \caption{Average total returns of SEQL and IQL for RWARE and LBF tasks}\label{fig:appendix:seql}
\end{figure}
\subsection{Preliminaries and Algorithm Details}
\textbf{Deep Q-Networks:} Deep Q-Networks (DQNs)~\cite{mnih2015human} are used to replace the traditional Q-tables~\cite{watkins1992q} by learning to estimate Q-values with a neural network. The algorithm uses an experience (replay) buffer $D$, which stores all experience tuples collected, circumventing the issue of time-correlated samples. Also, due to the instability created by bootstrapping, a second network with parameters $\bar\theta$ is used and updated by slowly copying the parameters of the network, $\theta$, during training. The network is trained by minimising the loss
\begin{equation}
    \begin{aligned}
    \label{eq:qloss}
        \mathcal{L}(\theta) = \frac{1}{M} \sum_{j=1}^{M}\left[(Q(s_j, a_j; \theta) - y_j)^2\right]
        \text{ with } y_j = r_j + \gamma \max_{a'}Q(s_j', a';\bar\theta)
    \end{aligned}
\end{equation}
computed over a batch of $M$ experience tuples $(s, a, r, s')$ sampled from $D$.

During each update of agent $i$, previously collected experiences are sampled from the experience replay buffer $D^i$ and used to compute and minimise the loss given in \Cref{eq:qloss}. Independently applying DQN for each agent in a MARL environment is referred to as \textit{Independent Q-Learning} (IQL)~\cite{Tan1993Multi-AgentAgents}. For such off-policy methods, sharing experience can naturally be done by sampling experience from either replay buffer $o, a, r, o'\sim D^1 \cup \ldots \cup D^N$ and using the same loss for optimisation. We refer to this variation of IQL as \textit{Shared Experience Q-Learning} (SEQL). In our experiments, we sample the same number of experience tuples $\frac{M}{N}$ from each replay buffer and the same sampled experience samples are used to optimise each agent. Hence, SEQL and IQL are optimised using exactly the same number of samples, in contrast to SEAC and IAC.


\subsection{Results}

Sharing experience in off-policy Q-Learning does improve performance, but does not show the same impact as for AC. We compare the performance of SEQL and IQL on one RWARE and LBF task to evaluate the impact of shared experience to off-policy MARL. \Cref{fig:appendix:seql} shows the average total returns of SEQL and IQL on both tasks over five seeds. In the RWARE task, sharing experience appears to reduce variance considerably despite not impacting average returns significantly. On the other hand, on the LBF task average returns increased significantly by sharing experience and at its best evaluation even exceeded average returns achieved by SEAC. However, variance of off-policy SEQL and IQL is found to be significantly larger compared to on-policy SEAC, IAC and SNAC.


\end{document}